\documentclass{article}

\usepackage{PRIMEarxiv}

\usepackage[utf8]{inputenc} 
\usepackage[T1]{fontenc}    
\usepackage{hyperref}       
\usepackage{url}            
\usepackage{booktabs}       
\usepackage{amsfonts}       
\usepackage{nicefrac}       
\usepackage{microtype}      
\usepackage{lipsum}
\usepackage{fancyhdr}       
\usepackage{graphicx}       
\usepackage{amsmath}
\usepackage{graphicx} 
\usepackage{rotating}
\usepackage{subfigure}

\usepackage{amsthm}
\usepackage{newtxtext} %
\usepackage{newtxmath}
\graphicspath{{media/}}     

\newtheorem {proposition}{Proposition}[section]

\newtheorem {theorem}[proposition]{Theorem}

\newtheorem {remark}[proposition]{Remark}
\newtheorem {example}{Example}[section]
\newtheorem {definition}{Definition}[section]

\pagestyle{fancy}
\thispagestyle{empty}
\rhead{ \textit{ }} 

\fancyhead[LO]{Fairness for Econometrics}

\title{Fairness constraint in Structural Econometrics and Application to fair estimation using Instrumental Variables}

\author{
  Samuele Centorrino  \\
  Stony Brook University \\
  Stony Brook \\
  NY, USA \\
  \texttt{ samuele.centorrino@stonybrook.edul} \\
   \And
  Jean-Pierre Florens \\
  Toulouse School of Economics\\
  University of Toulouse Capitole \\
  Toulouse, France\\
  \texttt{jean-pierre.florens@tse-fr.eu} \\
    \And
  Jean-Michel Loubes  \\
   Institut de Math\'ematiques de Toulouse,\\
   Universit\'e Toulouse Paul Sabatier, \\
  Toulouse, France\\
  \texttt{loubes@math.univ-toulouse.fr} \\
}

 \allowdisplaybreaks

\newcommand{\IR}{\mathbb{R}}

\newcommand{\indep}{\perp \!\!\! \perp}

\begin{document}
\maketitle

\begin{abstract}
A supervised machine learning algorithm determines a model from a learning sample that will be used to predict new observations. To this end, it aggregates individual characteristics of the observations of the learning sample. But this information aggregation does not consider any potential selection on unobservables and any status-quo biases which may be contained in the training sample. The latter bias has raised concerns around the so-called \textit{fairness} of machine learning algorithms, especially towards disadvantaged groups. In this chapter, we review the issue of fairness in machine learning through the lenses of structural econometrics models in which the unknown index is the solution of a functional equation and issues of endogeneity are explicitly accounted for. We model fairness as a linear operator whose null space contains the set of strictly {\it fair} indexes. A  {\it fair} solution is obtained by projecting the unconstrained index into the null space of this operator or by directly finding the closest solution of the functional equation into this null space. We also acknowledge that policymakers may incur a cost when moving away from the status quo. Achieving  \textit{approximate fairness} is obtained by introducing a fairness penalty in the learning procedure and balancing  more or less heavily the influence between the status- quo and a full fair solution.
\end{abstract}

\keywords{fairness  \and econometrics \and instrumental variables}

\section{Introduction}

Fairness has been a growing field of research in Machine Learning, Statistics, and Economics over the recent years. The purpose of such work is to monitor data driven models that rely too much on correlations with a variable which should not be used in the data. In particular for complex machine learning methods, the outcome of the algorithm can be considered as a black-box which provides a prediction without being able to understand the reasons for it. Accuracy of the model when forecasting has become the gold standard. Yet in many cases, the decisions are taken at the expenses of minority groups or driven by some characteristics of the observations from the learning sample that appear to be confounding variables. The model fitted by the algorithm may rely on correlations with a variable whose use is irrelevant. This variable is a potential source of bias which influences the behaviour of the algorithm. In many situations, the choice of this variable, known as the sensitive variable, can be driven by ethical issues, legal issues or regulation issues. From a moral point of view penalizing a group of individuals is an unfair decision. From a legal perspective \footnote{Artificial Intelligence European Act 2021 }, unfair algorithmic decisions are prohibited for a large number of applications, including access to education, welfare system or microfinance. To comply with fairness regulations, the institution may either choose to change the decision process to remove biases using affirmative actions or try to base their decision on a fair version of the outcome. \vskip .1in
A typical example is given by algorithmic decisions of machine learning procedures. When bias is present in the learning sample, the algorithm's output can be different for different subgroups of populations, while regulations may impose that such groups ought to be treated in the same way. For instance, discrimination can occur on the basis of gender or ethnic origin. A typical example is the one of automatic human resources (HR) decisions that are often influenced by gender. In available databases, men and women may self-select in some job categories due to past or present preferences or cultural customs. Some jobs are considered as male jobs while other jobs are female dominant. In such unbalanced datasets, the machine learning procedure learns that the gender matters and thus transform the correlation into a causality by using the gender variable as a causal variable in the future decisions. From a legal point of view this biased decision leads to punishable gender discrimination. We refer to \cite{de2019bias} for more insights on this gender gap. Disparate treatment for university admissions suffer from the same problems. We point out the well used dataset of law schools admissions described in \cite{mcintyre2018law}, which is used as a common benchmark to evaluate bias of algorithmic decisions.\vskip .1in
Imposing fairness is thus about mitigating this unwanted bias and preventing the sensitive variable to influence decisions. Fairness can be divided into two main categories. A first definition of fairness is to impose that the output of the algorithm is the same for all groups, hence that the sensitive variable does not play any role in the decision. Such equality of treatment is referred to as {\it statistical parity}. \\
 \indent A different fairness condition is given by the fact that we do not restrict to models giving the same forecast for the different subgroups but we rather wish to ensure that the algorithm has the same performance over all possible subgroups. For instance an algorithm could perform well for a category of the population but fail for others. It is the case with the well known predictive justice algorithm described in~\cite{angwin2016machine} where discrimination towards Afro-American is proven. When the performance of the algorithm is different for different groups of individuals, the notion of fairness which is violated is known as {\it equality of odds}. \vskip .1in Bias mitigation has been studied in this framework over the last years. Many methods have been developed to achieve fairness of algorithmic decisions. The proposed algorithm are usually divided into three categories. The first method is a post-processing method which consist in removing bias from the learning sample to learn a fair algorithm. The second way consists in imposing fairness constraint while learning the algorithm and balancing the desired fairness with the accuracy of the model. This method is an in-processing method. Finally, the last method is a post-processing method where the output of a possibly unfair algorithm is processed to achieve the desired level of fairness, modelled using different fairness measures. All three methodologies required a proper definition for fairness and a choice of fairness measures to quantify it.\vskip .1in
 \indent Achieving {\it full fairness} consists in removing completely the effect of the sensitive variable. it often involves an important changes with respect to the unfair case and comes at the expenses of accuracy of the algorithm, when the accuracy is measured using the biased distribution of the data set. When the loss of accuracy is considered too important by the designer of the model, an alternative consists in weakening the fairness constraint by choosing a way to quantify it. Unfortunately, there is not a universal measure to quantify a fair model since the notion of dependency are multiple. Complying some criterion at the same time is even proven to be impossible as pointed out in~\cite{3433949}. Hence the stakeholder has to choose a fairness criterion and then build a model for the which the fairness level will be above a certain chosen threshold. The model will thus be called {\it approximately fair}. We point out that choices of different fairness constraint give rise to different fair models. \vskip .1in
 \indent To sum up, fairness with respect to a given variable, $S$, is about controlling the influence of its distribution and preventing its influence on an estimator. We refer to \cite{barocas2016big}, \cite{chouldechova2017fair}, \cite{pmlr-v81-menon18a}, \cite{del2018obtaining}, \cite{Oneto2020} and \cite{besse2021survey} and references therein for deeper insights on the notion of bias mitigation and fairness. \vskip .1in \indent In the following we present the challenges of fairness constraints in econometrics. Some works have studied the importance of fairness in economics (see, for instance, \cite{rambachan2020economic}, \cite{lee2021formalising}, \cite{hoda2018moral}, \cite{hu2020fair}, \cite{kasy2021}, and references therein). As seen previously, the term fairness is polysemic and covers various notions. We will focus on the role and on the techniques that can be used to impose fairness in a specific class of econometrics models. \\
 \indent 
 
Let us consider the example in which an institution must make a decision concerning a group of individuals. For instance, this could be a university admitting new students based on their expected performance in a test; or a company deciding the hiring wage of new employees. This decision is made by an algorithm, which we suppose works in the following way. For a given vector of individual's characteristics, denoted by $X$, this algorithm computes a score $\varphi(X) \in \IR$, and makes a decision based on the value of this score, which is determined by a functional $\mathcal{D}$ of $\varphi$. We are not specific about the exact form of $\mathcal{D}(\varphi)$. For instance, this could be a threshold function in which students are admitted if the score is higher than or equal some values $C$, and they are not admitted otherwise. The algorithm is completed by a learning model, which is written as follows
\begin{equation} \label{mainmod}
Y = \varphi(X) + U,
\end{equation}
where $Y$ is the outcome and $U$ is a statistical error. For instance, $Y$ could be the test result from previous applicants. We let $X = (Z,S) \in \IR^{p+1}$ and $\mathcal{X} = \mathcal{Z} \times \mathcal{S}$ to be the support of the random vector $X$. We further restrict $\varphi \in L^2(X)$, with $L^2$ being the space of square integrable functions with respect to some probability distribution. This learning model is used to approximate the score, $\varphi(X)$, which is then used in the decision model. 

Let us assume that historical data show that students from private high schools obtain higher test scores than students in public high schools. The concern with fairness in this model is twofold. On the one hand, if the distinction between public and private school is used as a predictor, students from private schools will always have a higher probability of being admitted to a given university. On the other hand, the choice of school is an endogenous decision that is taken by the individual and may be determined by variables which are unobservable to the econometrician. Hence the bias will be reflected both in the lack of fairness in past decision-making processes and the endogeneity of individual's choices in the observational data. Hence, predictions and admission decisions may be unfair towards the minority class and bias the decision process, possibly leading to discrimination. To overcome this issue, we consider that decision makers can embed in their learning model a \textit{fairness} constraint. This fairness constraint limits the relationship between the score $\varphi(X)$ and $S$. Imposing a fairness constraint directly on $\varphi$ and not on $\mathcal{D}(\varphi)$ is done for technical convenience, as $\mathcal{D}(\varphi)$ is often nonlinear, which complicates substantially the estimation and prediction framework.

More generally, our aim is to study the consequences of incorporating a fairness constraint in the estimation procedure when the score, $\varphi$, solves a linear inverse problem of the type
\[
K\varphi = r,
\]
where $K$ is a linear operator. A leading example of this setting are nonparametric instrumental regressions \cite{newey2003,hall2005,darolles2011}, as mentioned above, but many other models, such as linear and non-linear parametric regressions and additive nonparametric regressions can fit in this general framework \cite{carrasco2007h}.

Let $\mathcal{E} = \lbrace \varphi \in L^2(X) \rbrace$, and $\mathcal{G}$ be the space of functions of $X$ which satisfy a fairness constraint. We model the latter as a linear operator $F : \mathcal{E} \rightarrow \mathcal{G} $ such that \begin{equation} \label{eq:fullfair} F \varphi = 0. \end{equation} 

That is, the kernel of the operator $F$ is the space of those functions which satisfy a fairness restriction, $\mathcal{N}(F) =\{ g \in \mathcal{E}, \: Fg=0 \} $. The \textit{full fairness} constraint implies to restrict the solutions to the functional problem to the kernel of the operator. To weaken this requirement, we also consider relaxations of the condition and define an \textit{approximate fairness} condition as $$ \|F \varphi \| \leq \rho$$ for some well chosen balance parameter $\rho \geq 0$. 

In this work, we consider fairness according to the following definitions. 

\begin{definition}[Statistical Parity] \label{def:fairness1}
The algorithm $\varphi$ maintains statistical parity if, for every $s \in \mathcal{S}$,
\[
E \left[ \varphi(Z,s) \vert S = s \right] = E \left[ \varphi(X) \right].
\]
\end{definition}

\begin{definition}[Irrelevance in prediction] \label{def:fairness2}
The algorithm $\varphi$ does not depend on $S$. \\ That is for all $ s \in \mathcal{S}$,
\[
\frac{\partial \varphi(x)}{\partial s} = 0.
\]
\end{definition}

The first definition implies that the function $\varphi$ is fair when individuals are treated the same, on average, irrespective of the value of the sensitive attribute, $S$. For instance, if $S$ is a binary characteristics of the population, with $S=1$ be the protected group, Definition \ref{def:fairness1} implies that the average score for the group $S=0$ and the average score for the group $S = 1$ are the same. Notice that this definition of fairness does not ensure that two individuals with the same vector of characteristics $Z = z$, but with different value of $S$ are treated in the same way. This is instead true for our second definition of fairness. In this case, fairness is defined as the lack of dependence of $\varphi$ on $S$, which implies the \textit{equality of odds} for individuals with the same vector of characteristics $Z = z$. We want to point out however that both these definitions may fail to deliver fairness if the correlation between $Z$ and $S$ is very strong. In our example above, if students going to private schools have higher income than students going to public schools, and income positively affects the potential score, then discrimination would still occur on the basis of income. \vskip .1in

Other definitions of fairness are possible. In particular, definitions that impose restriction on the entire distribution of $\varphi$ given $S$. These constraints are nonlinear and thus more cumbersome to deal with in practice, and we defer their study to future work. 

 \section{Examples in Econometrics}
 
We let $\mathcal{F}_1$ and $\mathcal{F}_2$ be the set of square integrable functions which satisfy definitions \ref{def:fairness1} and \ref{def:fairness2}, respectively. We consider below examples in which the function $\varphi_F$ satisfies
\[
\varphi_F = {{\rm arg}\min}_{f \in \mathcal{F}_j} \mathbb{E} \left[ \left( Y - f(X) \right)^2 \vert W =w\right],
\]
with $j = \lbrace 1,2 \rbrace$, and where $W$ is a vector of instrumental variables.
\vskip .1in

\subsection{Linear IV model} 
Consider the example of a linear model in which $\varphi(X) = Z^\prime \beta + S^\prime\gamma$, with $Z,\beta \in \IR^p$ and $S,\gamma \in \IR^q$. We take both $Z$ and $S$ to be potentially endogenous and we have a vector of instruments $W \in \IR^k$, such that $k \geq p+q$ and $E\left[ W^\prime U \right] = 0$. 

We let $X=(Z^\prime,S^\prime)^\prime$ be the vector of covariates, and $\varphi = (\beta^\prime,\gamma^\prime)^\prime$ be the vector of unknown coefficients. 

For simplicity of exposition, we maintain the assumption that the vector $$\begin{bmatrix} X\\W \end{bmatrix} \sim N \left( {\bf 0}_{p+q+k}, \begin{bmatrix} \Sigma_X & \Sigma^\prime_{XW}\\ \Sigma_{XW} &I_k\end{bmatrix}\right),$$ where ${\bf 0}_{p+q+k}$ is a vector of zeroes of dimension $p + q + k$, $I_k$ is the identity matrix of dimension $k$, and 
\[
\underbrace{\Sigma_X}_{(p + q) \times (p + q)} = \begin{bmatrix} \Sigma_Z & \Sigma^\prime_{ZS} \\ \Sigma_{ZS} & \Sigma_S \end{bmatrix}, \qquad \underbrace{\Sigma_{XW}}_{k \times (p + q)} = \begin{bmatrix} \Sigma_{ZW} & \Sigma_{SW} \end{bmatrix}.
\]
The unconstrained value of $\varphi$ is therefore given by
\[
\varphi = \left( \Sigma^\prime_{XW} \Sigma_{XW} \right)^{-1} \Sigma^\prime_{XW} E \left[ W Y\right] = \left( K^\ast K \right)^{-1} K^\ast r.
\]

Because of the assumption of joint normality, we have that $E\left[ Z \vert S \right] = \Pi S$, where $\Pi = \Sigma^{-1}_S \Sigma_{ZS}$ is a $p\times q$ matrix.

\subsection{A nonlinear IV model with a binary sensitive attribute}
Let $Z \in \IR^p$ be a continuous variable and $S = \lbrace 0, 1 \rbrace^q$ a binary random variable. For instance, $S$ can characterize gender, ethnicity, or a dummy for school choice (public vs private). Because of the binary nature of $S$
\[
\varphi(X) = \varphi_0(Z) + \varphi_1(Z) S.
\]
Definition \ref{def:fairness1} implies that we are looking for functions $\lbrace \varphi_0,\varphi_1\rbrace$ such that 
\[
E\left[ \varphi_0(Z) \vert S = 0 \right] = E\left[ \varphi_0(Z) + \varphi_1(Z) \vert S = 1 \right]. 
\]
That is
\[
E\left[ \varphi_1(Z) \vert S = 1 \right] = E\left[ \varphi_0(Z) \vert S = 0 \right] - E\left[ \varphi_0(Z) \vert S = 1 \right].
\]
Definition \ref{def:fairness2} instead simply implies that $\varphi_1 = 0$, almost surely. In particular, under the fairness restriction, $\varphi_0(Z) = E[ Y \vert Z]$. We develop this example in more detail in Section \ref{sec:illustration}.

\subsection{Fairness and structural econometrics}

In a more general fashion, supervised machine learning models are often about prediction of a conditional moment or a conditional probability. However, in many leading examples in structural econometrics, the score function, $\varphi$ does not correspond directly to a conditional distribution or a conditional moment of the distribution of the learning variable $Y$. Let $\Gamma$ be the probability distribution generating the data. Then the function $\varphi$ to be solution to the following equation
\[
A \left( \varphi, \Gamma \right) = 0.
\]
A leading example is the one of Neyman-Fisher-Cox-Rubin \textit{potential outcome} models, in which $X$ represents a treatment and, for $X = \xi$, we can write
\begin{equation} \label{eq:model}
Y_\xi = \varphi(\xi) + U_\xi.
\end{equation}
If $E\left[ U_\xi \vert W \right] = 0$, this model leads to the nonparametric instrumental regression model mentioned above, in which the function $A (\varphi,\Gamma) = E \left[ Y - \varphi(X) \vert W \right] =0$, and the fairness condition is imposed directly on the function $\varphi$. This potential outcome model can however lead to other objects of interest. For instance, if we assume for simplicity that $(X,W) \in \IR^2$, and under a different set of identification assumptions, it can be proven that
\[
A (\varphi,\Gamma) = E \left[\frac{ d\varphi(X)}{d X} \vert W \right] -\frac{\frac{d E\left[ Y \vert W \right]}{d W}}{\frac{d E\left[ Z \vert W \right]}{d W}}= 0,
\]
which is a linear equation in $\varphi$ which combines integral and differential operators \cite{florens2008}. In this case, the natural object of interest is the first derivative of $\varphi(x)$, which is the marginal treatment effect. The fairness constraint is therefore naturally imposed on $\frac{d\varphi(x)}{dx}$. 

Another class of structural models which is not explicitly considered in this work is the class of nonlinear nonseparable models. In these models, we have that
\[
Y = \varphi(X,U), \text{ with } U \indep W \text{ and } U \sim \mathcal{U}[0,1],
\]
and $\varphi(\xi,\cdot)$ monotone increasing in its second argument. In this case, $\varphi$ is the solution of the following non-linear inverse problem
\[
\int P \left( Y \leq \varphi(x,u) \vert X = x, W = w \right) f_{X\vert W} (x\vert w) dx = u.
\]
The additional difficulty lays on how to impose a \textit{distributional} fairness constraint in this setting. We defer the treatment of this case to future research.


\section{Fairness for Inverse Problems }
Recall that the nonparametric instrumental regression (NPIV) model amounts to solving an inverse problem defined as follows. Consider $W$ the instrument, the NPIV regression model can be written as
\[ E(Y|W=w)= E(\varphi(Z,S)|W=w) \] We let $X = (Z,S) \in R^{p + q}$ and $\mathcal{X} = \mathcal{Z} \times \mathcal{S}$ to be the support of the random vector $X$. We further restrict $\varphi \in L^2(X)$, with $L^2$ being the space of square integrable functions with respect to some distribution $\mathbf{P}$. \\

If we let $r =E(Y|W=w)$ and $K \varphi =  E(\varphi(Z,S)|W=w) $, where $K$ is  conditional expectation operator, then the NPIV framework amounts to solving an inverse problem. That is, estimating a function $\varphi_\dagger \in \mathcal{E}$ defined as the solution of 
\begin{equation} \label{eq:inv}
r=K \varphi_\dagger.
\end{equation}
If the operator $K^*K$ is invertible, the solution of \eqref{eq:inv} is given by
\begin{equation} \label{est0}
\varphi_\dagger= (K^*K)^{-1}K^*r.
\end{equation}
The ill-posedness of the inverse problem in \eqref{est0} comes from the fact that, when the distribution of $(X,W)$ is continuous, the eigenvalues of the operator $K^*K$ have zero an as accumulation point. To prevent ill-posedness of the operator, a usual solution consists in using a regularisation techique \cite{engl1996regularization},  and references therein. In this paper, we use the so-called Tikhonov regularization, which imposes an $L^2$-penalty on the function $\varphi$ \cite{natterer1984error}. The regularized solution, as presented in \cite{engl1996regularization}, is $\varphi_\alpha$, defined as the solution of a penalized optimization program \[ \varphi_\alpha = {\rm arg}\min_{\varphi \in \mathcal{E}} \|r-K \varphi\|^2 + \alpha \| \varphi \|^2 \] with the solution written as 
\begin{equation} \label{tiko} \varphi_\alpha = (\alpha {\rm Id} + K^*K)^{-1}K^* r =R_\alpha(K) K^* r \end{equation}
where $R_\alpha(K)=(\alpha {\rm Id} + K^*K)^{-1}$ is a Tikhonov regularized operator. \vskip .1in

We consider the estimation of the function $\varphi$ from the following noisy observational model 
\begin{equation} \label{eq:model}
\hat{r}= K \varphi_\dagger + U_n, 
\end{equation}
where $U_n$ is an unknown random function with bounded norm $\|U_n\|^2=O(\delta_n)$ for a given sequence $\delta_n$ which tends to 0 when $n$ goes to infinity. The operator $K$ is taken to be known for simplicity. This estimation problem has been widely studied in the econometrics literature, and we provide details on the estimation of the operator in Section \ref{sec:estimation}. We refer for instance to \cite{darolles2011} for the asymptotic properties of the NPIV estimator when the operator $K$ is estimated from data. \vskip .1in 
We assume the following conditions 
\begin{itemize}
\item{\bf [A1]} $r \in \mathcal{R}(K)$ where $\mathcal{R}(K)$ stands for the range of the operator $K$
\item{\bf [A2]} The operator $K^*K$ is a one to one operator. This condition ensures the identifiability of $\varphi_0$.
\item{\bf [A3]} Source Condition : we assume that there exists $\beta \leq 2$ such that $$ \varphi_\dagger \in \mathcal{R}(K^*K)^{\frac{\beta}{2}}.$$ This condition relates the smoothness of the solution of equation \eqref{eq:inv} to the decay of the eigenvalues of the SVD decomposition of the operator $K$. It is well used in inverse problems, we refer to \cite{loubes2009review} for a review of the different smoothness conditions for inverse problems. In particular it guarantees that the Tikhonov regularized solution $\varphi_\alpha$ converges to the true solution $\varphi_\dagger$ at a rate of convergence given by 
\[ \| \varphi_\alpha- \varphi_\dagger \|^2 =O(\alpha^\beta).\] 
\end{itemize}

\section{Full fairness IV approximation} \label{sec:fullfair}
 In this model, full fairness of a function $\psi \in \mathcal{E}$ is achieved when $F\psi =0$, i.e when the function belongs to the Kernel of the fairness operator. Hence imposing fairness amounts to considering function that belong to the Kernel space $\mathcal{N}(F)$ and that are approximate solution of the function equation \eqref{eq:inv}.
 The Full Fairness condition may be seen as a very restrictive way to impose fairness. Actually, if the functional equation does not have a solution in $\mathcal{N}(F)$, full fairness will induce a loss of accuracy which is the so-called {\it price for fairness}. The projection to fairness has been studied in the regression framework in \cite{le2020projection}, \cite{chzhen2020fair} and \cite{jiang2020wasserstein}, for the classification task. \\
 Actually full fairness condition can be achieved in two different ways : either by looking at the solution of the inverse problem and then imposing a fair condition on the solution, or solving the inverse problem under the restriction that the solution is fair. We prove that the two procedures are not equivalent and lead to different estimation having different properties.
 \begin{center}
\includegraphics{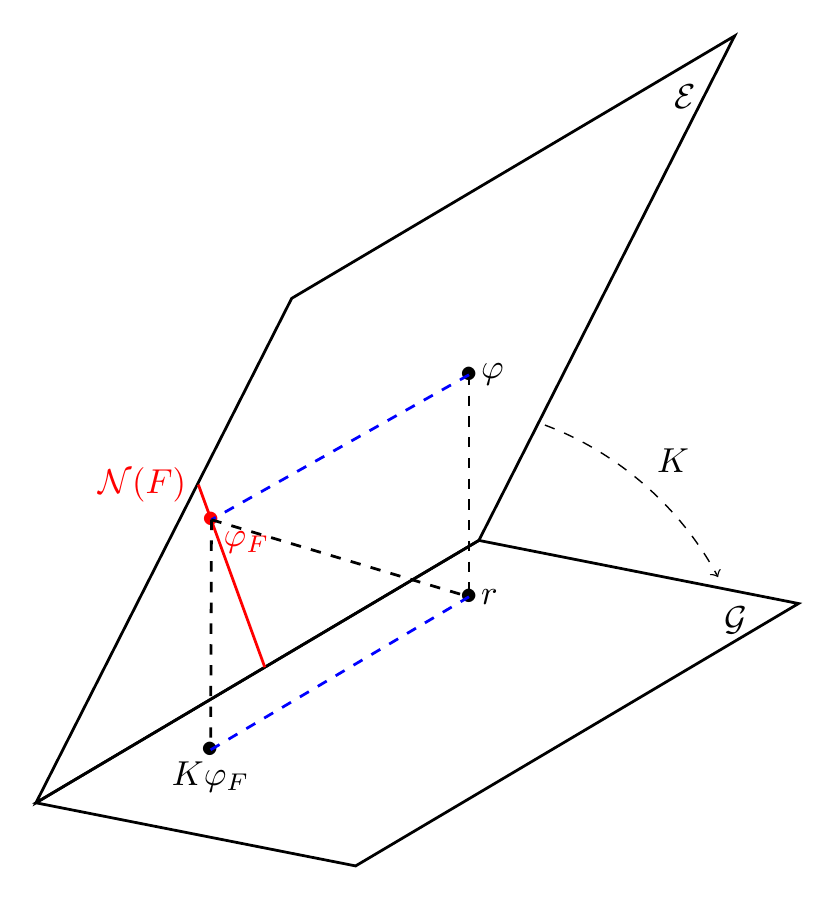} \label{fig:jolie}
\end{center}

Figure~\ref{fig:jolie} illustrates the situation where either the solution can be solved and then the fairness condition can be imposed or the solution is directly approximated in the set of fair functions.
 \subsection{Projection onto Fairness}
 \indent The first way consists in first considering the regularized solution to the inverse problem $\hat{\varphi}_\alpha$ defined as the Tikhonov regularized solution of the inverse problem
 $$ \hat{\varphi}_\alpha = {\rm arg}\min_{\varphi \in \mathcal{E}} \left( \| \hat{r}-K \varphi \|^2 +\alpha \| \varphi \|^2 \right) $$ which can be computed as
 $$ \hat{\varphi}_\alpha = (\alpha {\rm Id} + K^*K)^{-1}K^* \hat{r} =R_\alpha(K) K^* \hat{r} .$$
 Then the fair solution is defined as the projection onto the set which models the fairness condition $\mathcal{N}$, 
 $$ \hat{\varphi}_{\alpha,F} = {\rm arg}\min_{\varphi \in \mathcal{N}(F)} \| \hat{\varphi}_\alpha - \varphi \|^2$$
 In this framework, denote by $P: \mathcal{E} \rightarrow \mathcal{N}(F)$ the projection operator onto the kernel of the fairness operator. Hence we have 
 $$ \hat{\varphi}_{\alpha,F}=P \hat{\varphi}_\alpha.$$
 
 \begin{example}[Linear Model, continued.]
The constraint of statistical parity in Definition \ref{def:fairness1} implies that $$S^\prime(\Pi \beta + \gamma) = 0,$$ which is true as long as $\Pi \beta + \gamma = {\bf 0}_q$. Thus, we have that 
\[
\underbrace{F}_{q \times (p + q)} = \begin{bmatrix} \Pi & I_q \end{bmatrix},
\]
and 
\[
P = I_{p+q} - F^\prime \left( F F^\prime\right)^{-1} F = I_{p+q} - \begin{bmatrix} \Pi^\prime \left( I_q + \Pi \Pi^\prime\right)^{-1} \Pi & \Pi^\prime \left( I_q + \Pi \Pi^\prime\right)^{-1} \\ \left( I_q + \Pi \Pi^\prime\right)^{-1} \Pi & \left( I_q + \Pi \Pi^\prime\right)^{-1}\end{bmatrix},
\]
which immediately gives $FP = {\bf 0}_q$. Hence, the value of $\varphi_F = P\varphi$ is the projection of the vector $\varphi$ onto the null space of $F$. 

In the case of definition \ref{def:fairness2}, the fairness constraint is simply given by $\gamma = 0$. Let $$M_{ZW} = I_k - \Sigma_{ZW} \left( \Sigma^\prime_{ZW} \Sigma_{ZW} \right)^{-1} \Sigma^\prime_{ZW},$$ and $$A_{ZS} = \left( \Sigma^\prime_{ZW} \Sigma_{ZW} \right)^{-1} \Sigma^\prime_{ZW} \Sigma_{SW}.$$ When one wants to project the unconstrained estimator onto the constrained space, by the block matrix inversion lemma, we notice that
\begin{align*}
\varphi =& \begin{bmatrix} \left( \Sigma^\prime_{ZW} \Sigma_{ZW} \right)^{-1} + A_{ZS} \left( \Sigma^\prime_{SW} M_{ZW} \Sigma_{SW} \right)^{-1} A^\prime_{ZS} & - A_{ZS} \left( \Sigma^\prime_{SW} M_{ZW} \Sigma_{SW}\right)^{-1} \\ -\left( \Sigma^\prime_{SW} M_{ZW} \Sigma_{SW}\right)^{-1} A^\prime_{ZS} & \left( \Sigma^\prime_{SW} M_{ZW} \Sigma_{SW}\right)^{-1} \end{bmatrix} \begin{bmatrix} \Sigma^\prime_{ZW} E\left[ W Y \right] \\ \Sigma^\prime_{SW} E\left[ W Y \right]\end{bmatrix}\\
=& \begin{bmatrix} \left( \Sigma^\prime_{ZW} \Sigma_{ZW} \right)^{-1} \Sigma^\prime_{ZW} E\left[ W Y \right] - A_{ZS} \left( \Sigma^\prime_{SW} M_{ZW} \Sigma_{SW} \right)^{-1} \left( \Sigma^\prime_{SW} E\left[ W Y \right] - A^\prime_{ZS} \Sigma^\prime_{ZW} E\left[ W Y \right] \right) \\ \left( \Sigma^\prime_{SW} M_{ZW} \Sigma_{SW}\right)^{-1}\left( \Sigma^\prime_{SW} E\left[ W Y \right] - A^\prime_{ZS} \Sigma^\prime_{ZW} E\left[ W Y \right] \right)\end{bmatrix} \\
=& \begin{bmatrix} \left( \Sigma^\prime_{ZW} \Sigma_{ZW} \right)^{-1} \Sigma^\prime_{ZW} E\left[ W Y \right] - A_{ZS} \gamma \\ \gamma \end{bmatrix}.
\end{align*}
Therefore, we have that
\[
\varphi_F = P \varphi =\begin{bmatrix} \beta + A_{ZS} \gamma\\ {\bf 0}_q \end{bmatrix}.
\]
\end{example} 

The behaviour of the projection of the unfair solution onto the space of fair functions is given by the following theorem 
 \begin{theorem}
 Under Assumptions [A1] to [A3], the fair projection estimator is such that
 \begin{equation}
 \| \hat{\varphi}_{\alpha,F} - P \varphi_\dagger \|^2=O\left(\frac{1}{\alpha \delta_n} + \alpha^\beta \right)
 \end{equation}
 \end{theorem}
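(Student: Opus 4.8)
The plan is to decompose the error $\|\hat{\varphi}_{\alpha,F} - P\varphi_\dagger\|$ via the triangle inequality into a ``variance'' term coming from the noise $U_n$ and a ``bias'' term coming from Tikhonov regularization, then to exploit the fact that $P$ is an orthogonal projection and hence has operator norm one. Concretely, since $\hat{\varphi}_{\alpha,F} = P\hat{\varphi}_\alpha$, I would write
\[
\|\hat{\varphi}_{\alpha,F} - P\varphi_\dagger\| = \|P(\hat{\varphi}_\alpha - \varphi_\dagger)\| \leq \|\hat{\varphi}_\alpha - \varphi_\dagger\| \leq \|\hat{\varphi}_\alpha - \varphi_\alpha\| + \|\varphi_\alpha - \varphi_\dagger\|,
\]
where $\varphi_\alpha = R_\alpha(K)K^*r$ is the (noiseless) Tikhonov solution and $\hat{\varphi}_\alpha = R_\alpha(K)K^*\hat{r}$. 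The second term is controlled directly by Assumption [A3], which gives $\|\varphi_\alpha - \varphi_\dagger\|^2 = O(\alpha^\beta)$.

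For the first (stochastic) term, I would use $\hat{r} = K\varphi_\dagger + U_n$ to write $\hat{\varphi}_\alpha - \varphi_\alpha = R_\alpha(K)K^* U_n$, so that
\[
\|\hat{\varphi}_\alpha - \varphi_\alpha\| \leq \|R_\alpha(K)K^*\|\,\|U_n\|.
\]
The key spectral estimate is the standard bound $\|R_\alpha(K)K^*\| = \|(\alpha\,\mathrm{Id} + K^*K)^{-1}K^*\| \leq \tfrac{1}{2\sqrt{\alpha}}$, obtained by applying the spectral calculus to $K^*K$ and maximizing $\sqrt{\lambda}/(\alpha+\lambda)$ over $\lambda \geq 0$. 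Since $\|U_n\|^2 = O(\delta_n)$, this yields $\|\hat{\varphi}_\alpha - \varphi_\alpha\|^2 = O(\delta_n/\alpha)$. Combining the two pieces and squaring (up to the usual factor of $2$ from $(a+b)^2 \leq 2a^2 + 2b^2$) gives $\|\hat{\varphi}_{\alpha,F} - P\varphi_\dagger\|^2 = O(\delta_n/\alpha + \alpha^\beta)$, which matches the claimed rate once we read $\tfrac{1}{\alpha\delta_n}$ in the statement as the intended $\tfrac{\delta_n}{\alpha}$ (or reinterpret the normalization of $\|U_n\|$ accordingly).

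**The main subtlety** is not in any single inequality but in being careful that $P\varphi_\dagger$ — and not $\varphi_\dagger$ itself, nor $P\varphi_\alpha$ — is the right target: because $P$ is a contraction, projecting \emph{after} regularizing cannot inflate the error relative to the unprojected estimator, which is exactly why the same rate as the classical Tikhonov bound survives. I would state explicitly that this step uses only $\|P\| \leq 1$ and requires no compatibility between $P$ and $K$ (in contrast to the ``constrained inversion'' approach alluded to around Figure~\ref{fig:jolie}, where such interaction does matter). A minor point worth a sentence is that Assumptions [A1]–[A2] guarantee $\varphi_\dagger$ is well defined and that $\varphi_\alpha \to \varphi_\dagger$, so the decomposition above is legitimate; everything else is the spectral bound on $R_\alpha(K)K^*$ together with the hypothesized noise level.
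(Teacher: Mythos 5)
Your proof is correct and takes essentially the same route as the paper: the key step in both is the contraction bound $\|\hat{\varphi}_{\alpha,F} - P\varphi_\dagger\| = \|P(\hat{\varphi}_\alpha - \varphi_\dagger)\| \leq \|\hat{\varphi}_\alpha - \varphi_\dagger\|$, after which the paper simply cites the standard nonparametric IV rate from the literature, whereas you rederive it via the bias--variance split and the spectral estimate $\|(\alpha\,\mathrm{Id}+K^*K)^{-1}K^*\| \leq \tfrac{1}{2\sqrt{\alpha}}$. Your observation that the variance term should read $\delta_n/\alpha$ rather than $\tfrac{1}{\alpha\delta_n}$, given the stated normalization $\|U_n\|^2 = O(\delta_n)$ with $\delta_n \to 0$, is well taken --- the paper's notation is internally inconsistent on this point and your reading is the correct one.
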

 \begin{proof}
 \begin{align*}
 \| \hat{\varphi}_{\alpha,F} - P \varphi_\dagger \| & \leq \| P \hat{\varphi}_\alpha - P \varphi_\dagger \| \\
 & \leq \| \hat{\varphi}_\alpha - \varphi_\dagger \| 
 \end{align*} since $P$ is a projection. The term $\| \hat{\varphi}_\alpha - \varphi_\dagger \|$ is the usual estimation term for the structural IV inverse problem. As proved in \cite{darolles2011} this term converges with the following rate of convergence 
 \[ \| \hat{\varphi}_\alpha - \varphi_\dagger \|^2 =O \left(\frac{1}{\alpha \delta_n} + \alpha^\beta \right), \] which proves the result.
 \end{proof}
 The estimator converges towards the fair part of the function $\varphi_\dagger$, i.e its projection onto the Kernel of the fairness operator $F$. If we consider the difference with respect to the usual solution we have that 
 $$ \| \hat{\varphi}_\alpha - \varphi_\dagger \|^2 =O \left(\frac{1}{\alpha \delta_n} + \alpha^\beta + \| \varphi_\dagger - P \varphi_\dagger\|^2 \right).$$
 Hence the difference $ \| \varphi_\dagger - P \varphi_\dagger\|^2 $ corresponds to the price to pay for ensuring fairness of the solution, which is null only if the true function satisfies the fairness constraint. This difference between the underlying function $\varphi_\dagger$ and its fair representation is the necessary change of the model that would enable a fair decision process minimizing the quadratic distance between the fair and the unfair functions.
 
 \subsection{Fair solution of the structural IV equation}
 A second and alternative solution to impose fairness is to solve directly the structural IV equation on the fairness space $\mathcal{N}(F)$. We denote by $K_F$ the operator $K$ restricted to $\mathcal{N}(F)$, 
 $ K_F : \mathcal{N}(F) \mapsto \mathcal{F}$. Since $\mathcal{N}(F)$ is a convex closed space, the projection onto this space is well defined and unique. We will write $P$ the projection onto $ \mathcal{N}(F) $ and $P^\perp$ the projection onto its orthogonal complement in $\mathcal{E}$, $ \mathcal{N}(F)^\perp$. \\ With these notations, we get that $K_F= K P$. \\
 \begin{definition}
Define $\varphi_{K_F}$ as the solution of the the structural equation $K \varphi = r$ in the set of fair functions defined as the kernel of the operator $F$, i.e 
 \[ {\varphi}_{K_F} = {\rm arg}\min_{\varphi \in \mathcal{N}(F)} \left( \| r-K\varphi\|^2 \right). \]
\end{definition}
Note that $\varphi_{K_F}$ is the projection of $\varphi_\dagger$ onto $\mathcal{N}(F)$ with the metric defined by $K^{*}K$, since 
 \[ {\varphi}_{K_F} = {\rm arg}\min_{\varphi \in \mathcal{N}(F)} \left( \| K \varphi_\dagger-K\varphi\|^2 \right). \]
 Note that this approximation  depends not only on $K$ but on the properties of the fair kernel $K_F=KP$. So the fairness is here quantified using its effect through the operator $K$ and thus we have called it $\varphi_{K_F}$ to highlight this dependency since the solution depends on $K$ and on $F$.\vskip .1in
 The following proposition proposes an explicit expression of ${\varphi}_{K_F} $.
 \begin{proposition}
 $${\varphi}_{K_F} = ( K_F^*K_F)^{-1} K_F^* {r}.$$
 \end{proposition}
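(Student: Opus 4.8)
The plan is to recognize $\varphi_{K_F}$ as the least-squares solution of the restricted equation $K_F \psi = r$ with $\psi$ ranging over $\mathcal{N}(F)$, and then to apply the standard normal-equations argument in the Hilbert space $\mathcal{N}(F)$, which is a closed (hence complete) subspace of $\mathcal{E}$. Concretely, by definition $\varphi_{K_F} = {\rm arg}\min_{\varphi \in \mathcal{N}(F)} \| r - K\varphi \|^2$, and since $\varphi \in \mathcal{N}(F)$ satisfies $P\varphi = \varphi$ we may write $K\varphi = KP\varphi = K_F \varphi$. Thus $\varphi_{K_F}$ minimizes $\| r - K_F \varphi \|^2$ over $\varphi \in \mathcal{N}(F)$, i.e. it is the best approximation problem for the bounded linear operator $K_F : \mathcal{N}(F) \to \mathcal{F}$.

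First I would write the first-order condition: $\varphi_{K_F}$ solves the minimization problem if and only if $r - K_F \varphi_{K_F}$ is orthogonal to the range of $K_F$, equivalently $K_F^\ast (r - K_F \varphi_{K_F}) = 0$ in $\mathcal{N}(F)$, which is the normal equation $K_F^\ast K_F \varphi_{K_F} = K_F^\ast r$. This is the projection theorem applied to the closed subspace $\overline{\mathcal{R}(K_F)} \subseteq \mathcal{F}$; existence and uniqueness of the minimizer over $\mathcal{N}(F)$ follow once we know $K_F^\ast K_F$ is one-to-one on $\mathcal{N}(F)$. Next I would invert: provided $K_F^\ast K_F : \mathcal{N}(F) \to \mathcal{N}(F)$ is invertible, the normal equation yields $\varphi_{K_F} = (K_F^\ast K_F)^{-1} K_F^\ast r$, which is the claimed formula.

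The main obstacle is justifying that $K_F^\ast K_F$ is boundedly invertible (or at least injective, so that the expression $(K_F^\ast K_F)^{-1}K_F^\ast r$ is well-defined as applied to $r \in \mathcal{R}(K) \subseteq \overline{\mathcal{R}(K_F)}^{\,\perp\perp}$). Injectivity of $K_F^\ast K_F$ on $\mathcal{N}(F)$ is equivalent to injectivity of $K_F = KP$ restricted to $\mathcal{N}(F)$, i.e. to $\mathcal{N}(K) \cap \mathcal{N}(F) = \{0\}$; this follows from Assumption [A2], which states that $K^\ast K$ — hence $K$ — is one-to-one on all of $\mathcal{E}$, so a fortiori on $\mathcal{N}(F)$. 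I would note that $K_F^\ast = P K^\ast$ (the adjoint of $KP$ with respect to the $\mathcal{E}$-inner product restricted to $\mathcal{N}(F)$), so $K_F^\ast K_F = P K^\ast K P$, and injectivity of this operator on $\mathcal{N}(F)$ is immediate from [A2]. Strictly speaking, in the genuinely ill-posed infinite-dimensional case $(K_F^\ast K_F)^{-1}$ is an unbounded operator and the formula should be read on the appropriate domain (or understood as the regularized limit); I would add a remark to that effect, exactly parallel to the treatment of $(K^\ast K)^{-1}K^\ast r$ in equation \eqref{est0}, so that the proposition is the natural analogue of the unconstrained case with $K$ replaced by $K_F$.
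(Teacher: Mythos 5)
Your proof is correct and takes essentially the same route as the paper's: both derive the normal equation from the first-order orthogonality condition $\langle r - K\varphi_{K_F}, K\psi\rangle = 0$ for all $\psi \in \mathcal{N}(F)$, identify $K_F^\ast = PK^\ast$ so that $K_F^\ast K_F = PK^\ast K P$, and invert on $\mathcal{N}(F)$. Your additional discussion of why $K_F^\ast K_F$ is injective on $\mathcal{N}(F)$ (via Assumption [A2]) and of the domain issues for the unbounded inverse in the ill-posed case is a useful refinement of details the paper leaves implicit.
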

 \begin{proof} First note that $ {\varphi}_{K_F} $ belongs to $ \mathcal{N}(F) $. For any function $g\in \mathcal{E}$, $PK^*K g \in \mathcal{N}(F) $ so the operator $( K_F^*K_F)^{-1}=(PK^{*}KP)^{-1}$ is defined from $\mathcal{N}(F) \mapsto \mathcal{N}(F) $. \\
 Let $\psi \in \mathcal{N}(F) $ so $P \psi = \psi$. We have that 
 \begin{align*}
0& = <r - K \varphi_{K_F}, K \psi> \\
 & = <K^{*} r - K^{*} K \varphi_{K_F} ,  \psi> \\
 & = <K^{*} r - K^{*} K P \varphi_{K_F} , P \psi> \\
 & = <P K^{*} r - P K^{*} K P \varphi_{K_F}, \psi>
 \end{align*} which holds for $P K^{*} r - P K^{*} K P \varphi_{K_F}=0$ which leads to ${\varphi}_{K_F} = ( PK^*K P)^{-1} P K^* {r}.$
 \end{proof}
 
 \begin{example}[Linear model, continued.]
For both our definitions of fairness in \ref{def:fairness1} and \ref{def:fairness2}, we have that $$\varphi_{K_F} = \left( P\Sigma^\prime_{XW} \Sigma_{XW} P \right)^{-1} P\Sigma^\prime_{XW} E\left[ W Y\right],$$ which simply restricts the conditional expectation operators onto the null space of $F$. 

In the case of definition \ref{def:fairness2}, the closed form expression of this estimator is easy to obtain and it is equal to
\[
\varphi_{K_F} = \begin{pmatrix} \left( \Sigma^\prime_{ZW} \Sigma_{ZW} \right)^{-1} \Sigma^\prime_{ZW} E \left[ W Y \right] \\ {\bf 0}_q \end{pmatrix} = \left( P\Sigma^\prime_{XW} \Sigma_{XW} P \right)^{-1} P\Sigma^\prime_{XW} E\left[ W Y\right],
\]
which is equivalent to exclude $S$ from the second stage estimation of the IV model, and where
\[
F = \begin{bmatrix} {\bf 0}_{p \times p} & {\bf 0}_{p \times q} \\ {\bf 0}_{q \times p} & I_q \end{bmatrix}, \text{ and } P = I_{p+q} - F. 
\]
 \end{example}
 
Now consider the fair approximation of the solution of \eqref{eq:inv}  as the solution of the following minimization program \[ \hat{\varphi}_{K_F,\alpha} = {\rm arg}\min_{\varphi \in \mathcal{N}(F)} \left( \| \hat{r}-K\varphi\|^2 + \alpha \| \varphi\|^2 \right). \]
\begin{proposition}
The fair solution of the IV structural equation has the following expression
\[ \hat{\varphi}_{K_F,\alpha} = (\alpha {\rm Id}+ K_F^*K_F)^{-1} K_F^* \hat{r}. \]
It converges to $\varphi_{K_F}$ when $\alpha$ goes to zero as soon as $\alpha$ is chosen such that $\alpha \delta_n \rightarrow + \infty$.
\end{proposition}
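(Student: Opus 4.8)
The plan is to establish the two assertions in turn: the closed form of $\hat{\varphi}_{K_F,\alpha}$ follows from the variational characterisation of the constrained Tikhonov minimiser, and the convergence follows from the classical bias/noise decomposition for Tikhonov regularisation, now carried out on the Hilbert space $\mathcal{N}(F)$ with the restricted operator $K_F = KP$.

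First I would derive the formula. Since $\mathcal{N}(F)$ is a closed subspace of $\mathcal{E}$, the map $\varphi \mapsto \|\hat r - K\varphi\|^2 + \alpha\|\varphi\|^2$ is strictly convex and coercive on it, so $\hat{\varphi}_{K_F,\alpha}$ exists and is unique. Differentiating along an arbitrary direction $\psi \in \mathcal{N}(F)$ — exactly as in the proof of the preceding Proposition giving $\varphi_{K_F} = (K_F^*K_F)^{-1}K_F^* r$ — one gets $\langle K^*\hat r - K^*K\hat{\varphi}_{K_F,\alpha} - \alpha\hat{\varphi}_{K_F,\alpha},\, \psi\rangle = 0$ for every $\psi \in \mathcal{N}(F)$, i.e.\ the residual $K^*\hat r - K^*K\hat{\varphi}_{K_F,\alpha} - \alpha\hat{\varphi}_{K_F,\alpha}$ lies in $\mathcal{N}(F)^\perp$. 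Applying the orthogonal projector $P$ (so $P=P^*$, $P^2=P$), and using $P\hat{\varphi}_{K_F,\alpha} = \hat{\varphi}_{K_F,\alpha}$ together with $K_F = KP$, $K_F^* = PK^*$ and $K_F^*K_F = PK^*KP$, this collapses to $(\alpha\,{\rm Id} + K_F^*K_F)\,\hat{\varphi}_{K_F,\alpha} = K_F^*\hat r$. For $\alpha > 0$ the operator $\alpha\,{\rm Id} + K_F^*K_F$ is self-adjoint and satisfies $\langle (\alpha\,{\rm Id} + K_F^*K_F)g,g\rangle = \alpha\|g\|^2 + \|K_F g\|^2 \ge \alpha\|g\|^2$ on $\mathcal{N}(F)$, hence is boundedly invertible there, which yields $\hat{\varphi}_{K_F,\alpha} = (\alpha\,{\rm Id} + K_F^*K_F)^{-1}K_F^*\hat r$.

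For the convergence, recall $r = K\varphi_\dagger$ and that, by the preceding Proposition, $\varphi_{K_F}$ solves the fair normal equation $K_F^* r = K_F^*K_F\varphi_{K_F}$ (Assumption [A2] makes $K_F^*K_F$ one-to-one on $\mathcal{N}(F)$, so $\varphi_{K_F}$ is well defined). Then I would split
\[
\hat{\varphi}_{K_F,\alpha} - \varphi_{K_F} = (\alpha\,{\rm Id}+K_F^*K_F)^{-1}K_F^*(\hat r - r) \;+\; \Big[(\alpha\,{\rm Id}+K_F^*K_F)^{-1}K_F^* r - \varphi_{K_F}\Big].
\]
The deterministic bracket equals $-\alpha(\alpha\,{\rm Id}+K_F^*K_F)^{-1}\varphi_{K_F}$, from the identity $(\alpha\,{\rm Id}+A)^{-1}A - {\rm Id} = -\alpha(\alpha\,{\rm Id}+A)^{-1}$, and its norm tends to $0$ as $\alpha\to0$ by the standard spectral (dominated-convergence over the SVD of $K_F$) argument for Tikhonov regularisation on the Hilbert space $\mathcal{N}(F)$, see \cite{engl1996regularization}. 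For the stochastic term, since $K$, hence $K_F = KP$, is compact, its singular value decomposition gives the usual bound $\|(\alpha\,{\rm Id}+K_F^*K_F)^{-1}K_F^*\| \le \tfrac{1}{2\sqrt{\alpha}}$, so this term is bounded by $\tfrac{1}{2\sqrt\alpha}\|\hat r - r\| = \tfrac{1}{2\sqrt\alpha}\|U_n\|$, which is of the same order as the variance term in the previous Theorem and tends to $0$ under the stated condition $\alpha\delta_n\to+\infty$, cf.\ \cite{darolles2011}. A triangle inequality then gives $\|\hat{\varphi}_{K_F,\alpha} - \varphi_{K_F}\|\to0$.

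The main obstacle is the bookkeeping in the first step: the variational condition only pins down the residual modulo $\mathcal{N}(F)^\perp$, and one must check carefully that projecting with $P$ and using $P=P^*$, $P^2=P$, $\hat{\varphi}_{K_F,\alpha}\in\mathcal{N}(F)$ turns it into the clean identity $(\alpha\,{\rm Id}+K_F^*K_F)\hat{\varphi}_{K_F,\alpha} = K_F^*\hat r$; one also has to verify that passing to $\mathcal{N}(F)$ preserves everything needed to reuse off-the-shelf Tikhonov estimates (closedness of $\mathcal{N}(F)$, self-adjointness of $P$, injectivity of $K_F^*K_F$ under [A2], compactness of $K_F$). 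Once these are in place, both the explicit formula and the two convergence estimates are routine.
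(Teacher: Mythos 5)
Your proposal is correct and follows essentially the same route as the paper's proof: the first-order condition on $\mathcal{N}(F)$ combined with $P=P^{*}=P^{2}$ and $K_F^{*}=PK^{*}$ yields the closed form, and the same bias--variance decomposition is used, with the deterministic term collapsing to $-\alpha(\alpha\,{\rm Id}+K_F^{*}K_F)^{-1}\varphi_{K_F}$ (your normal-equation route and the paper's resolvent identity $A^{-1}-B^{-1}=A^{-1}(B-A)B^{-1}$ are algebraically the same step). Your write-up is merely a bit more explicit about existence and uniqueness of the minimiser and the coercivity of $\alpha\,{\rm Id}+K_F^{*}K_F$ on $\mathcal{N}(F)$, which the paper leaves implicit.
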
 
\begin{proof}
As previously, $\hat{\varphi}_{K_F,\alpha} $ minimizes in $\mathcal{N}(F)$, $ \| \hat{r}-K\varphi\|^2 + \alpha \| \varphi\|^2$. Hence the first order condition is that for all $g \in \mathcal{N}(F)$ we have 
\begin{align*} <-Kg,\hat{r}-K \varphi> + \alpha <g,\varphi> & = 0 \\
<g,K^{*}K\varphi-K^* \hat{r}>+ \alpha<g,\varphi> & = 0 \\
<g, P K^{*}K\varphi-P K^* \hat{r} + \alpha \varphi >& =0.
\end{align*}
Hence using $K^*_F= PK^*$ and since $\varphi$ is in $\mathcal{N}(F)$ and thus $P \varphi = \varphi$, we obtain the expression of the theorem. \\
Using this expression we can compute the estimation as follows :
\begin{align*} & \hat{\varphi}_{K_F,\alpha} -\varphi_{K_F}= \\
& (\alpha {\rm Id}+ K_F^*K_F)^{-1} K_F^* (\hat{r}-K \varphi_\dagger) + ((\alpha {\rm Id}+ K_F^*K_F)^{-1} - (K_F^*K_F)^{-1} ) K_F^* K \varphi_\dagger \\
& = (I) \quad + \quad (II). \end{align*}
The first term is a variance term which is such that 
$$ \| (I) \|^2 = O \left( \frac{1}{\alpha \delta_n} \right).$$
 Recall that for two operators $$ A^{-1}-B^{-1}= A^{-1}(B-A)B^{-1} $$ 
 Hence the second term can be written as
 $$(II)= -\alpha (\alpha {\rm Id}+ K_F^*K_F)^{-1} \varphi_{K_F} .$$ This tern is the bias of Tikhonov's regularization of the operator $K_F^*K_F = PK^*K P$
 which goes to zero when $\alpha$ goes to zero.
 \end{proof}
 When $\alpha$ decreases to zero, the rate of consistency of the projected fair estimator can be made precise if we assume some Hilbert scale regularity for both the fair part of $\varphi_\dagger$ and the remaining {\it unfair} part $P^\perp \varphi_\dagger$. \vskip .1in
 Assume that 
 
\begin{itemize}
\item {\bf [E1]} $ P \varphi_\dagger \in \mathcal{R}(PK^*KP)^\frac{\beta}{2}$ for $\beta \leq 2$
\item {\bf [E2]} $ P^\perp \varphi_\dagger \in \mathcal{R}(PK^*KP)^\frac{\gamma}{2}$ for $\gamma \leq 2$.
\end{itemize} 
Previous assumptions are analogous to the source condition {\bf [A3]} adapted to the fair operator $K_F$.
 \begin{theorem}
Under Assumptions {\bf [E1]} and {\bf [E2]}, the estimator $\hat{\varphi}_{K_F}$ converges towards $\varphi_{K_F}$ at the following rate
\[ \| \hat{\varphi}_{K_F}- \varphi_{K_F} \|^2 = O \left( \frac{1}{\alpha \delta_n} +\alpha^{min(\beta,\gamma)}\right) \]
 \end{theorem}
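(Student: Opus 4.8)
The plan is to lean on the decomposition already obtained in the preceding proposition, where (writing $\hat{\varphi}_{K_F}$ for $\hat{\varphi}_{K_F,\alpha}$) we had
\[
\hat{\varphi}_{K_F,\alpha} - \varphi_{K_F} \;=\; \underbrace{(\alpha\,{\rm Id} + K_F^*K_F)^{-1}K_F^*(\hat{r} - K\varphi_\dagger)}_{(I)}\;\;\underbrace{-\,\alpha\,(\alpha\,{\rm Id} + K_F^*K_F)^{-1}\varphi_{K_F}}_{(II)},
\]
and where the stochastic term was shown to satisfy $\|(I)\|^2 = O\!\left(1/(\alpha\delta_n)\right)$. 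Since $\|\hat{\varphi}_{K_F,\alpha} - \varphi_{K_F}\|^2 \le 2\|(I)\|^2 + 2\|(II)\|^2$, it suffices to prove that the deterministic bias obeys $\|(II)\|^2 = O\!\left(\alpha^{\min(\beta,\gamma)}\right)$. All the work would take place in the Hilbert space $\mathcal{N}(F)$, on which $K_F^*K_F = PK^*KP$ is self-adjoint, non-negative and compact, so that functional calculus is available; the only analytic fact used repeatedly is the elementary estimate $\sup_{\lambda>0}\alpha\lambda^{s/2}/(\alpha+\lambda)\le C_s\,\alpha^{s/2}$, valid for every $0<s\le2$.

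The first step is to separate inside $\varphi_{K_F}$ the fair and the unfair contributions of $\varphi_\dagger$. Splitting $\varphi_\dagger = P\varphi_\dagger + P^\perp\varphi_\dagger$ and using $K_F^* r = PK^*K\varphi_\dagger$ together with the closed form of the preceding proposition,
\[
\varphi_{K_F} = (PK^*KP)^{-1}PK^*K\varphi_\dagger = P\varphi_\dagger + (PK^*KP)^{-1}PK^*KP^\perp\varphi_\dagger,
\]
since $(PK^*KP)^{-1}PK^*KP\,\varphi_\dagger = P\varphi_\dagger$ on $\mathcal{N}(F)$. Inserting this identity into $(II)$ decomposes the bias as $(II) = (II)_a + (II)_b$, with $(II)_a$ driven by $P\varphi_\dagger$ and $(II)_b$ by $P^\perp\varphi_\dagger$.

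For $(II)_a$, Assumption {\bf [E1]} yields $P\varphi_\dagger = (PK^*KP)^{\beta/2}v$ for some $v\in\mathcal{N}(F)$, hence $(II)_a = -\,\alpha(\alpha\,{\rm Id}+PK^*KP)^{-1}(PK^*KP)^{\beta/2}v$, whose norm is $O(\alpha^{\beta/2})$ by the spectral estimate with $s=\beta$; this is nothing but the classical Tikhonov bias bound, now read off the operator $K_F=KP$. For $(II)_b$, Assumption {\bf [E2]} is meant to play the same role for the unfair component: it ensures that the element $(PK^*KP)^{-1}PK^*KP^\perp\varphi_\dagger$ of $\mathcal{N}(F)$ — the portion of $\varphi_{K_F}$ generated by $P^\perp\varphi_\dagger$ — belongs to $\mathcal{R}\!\left((PK^*KP)^{\gamma/2}\right)$, so that the same argument with $s=\gamma$ gives $\|(II)_b\| = O(\alpha^{\gamma/2})$. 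Adding the two contributions and using $\alpha^{\beta/2}+\alpha^{\gamma/2}\le 2\,\alpha^{\min(\beta,\gamma)/2}$ for small $\alpha$, one gets $\|(II)\| = O\!\left(\alpha^{\min(\beta,\gamma)/2}\right)$, i.e. $\|(II)\|^2 = O\!\left(\alpha^{\min(\beta,\gamma)}\right)$, which combined with the variance bound finishes the proof.

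The step I expect to be the real obstacle is the bound on $(II)_b$: one has to pin down precisely in what sense {\bf [E2]} is a source condition, since the assumed regularity sits on $P^\perp\varphi_\dagger\in\mathcal{N}(F)^\perp$ while it is felt in $\varphi_{K_F}$ only after the mixed map $(PK^*KP)^{-1}PK^*K$ has carried it into $\mathcal{N}(F)$, and one must check that this map does not spoil the assumed smoothness — equivalently, that $PK^*KP^\perp\varphi_\dagger\in\mathcal{R}\!\left((PK^*KP)^{1+\gamma/2}\right)$ — so that the functional-calculus estimate is legitimately applicable. The treatment of $(II)_a$, by contrast, is a verbatim transcription of the standard Tikhonov analysis behind {\bf [A3]}, with $K$ replaced by the restricted operator $K_F$.
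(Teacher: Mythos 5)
Your proposal follows essentially the same route as the paper: the identical decomposition of the bias term $(II)$ into the contribution of $P\varphi_\dagger$ (a standard Tikhonov bias for $K_F$, bounded by $O(\alpha^{\beta})$ under \textbf{[E1]}) and the contribution of $P^\perp\varphi_\dagger$, and your range-condition reading of \textbf{[E2]} for the latter is exactly equivalent to the SVD summability condition $\sum_{j}\vert\langle KP^\perp\varphi_\dagger,e_j\rangle\vert^2/\lambda_j^{2(1+\gamma)}<+\infty$ that the paper derives. The delicate point you flag --- how the regularity assumed on $P^\perp\varphi_\dagger$ is transported into $\mathcal{N}(F)$ by the mixed map $(PK^*KP)^{-1}PK^*K$ --- is precisely the step the paper handles (rather tersely) via the identification $K^*e_j=\lambda_j\psi_j$ in its spectral computation, so your attempt is correct and matches the paper's argument.
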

 We recognise the usual rate of convergence of the Tikhonov's regularized estimator. The main change is given here by the fact that the rate is driven by the fair source conditions {\bf [E1]} and {\bf [E2]} which relates the smoothness of the function with the decay of the SVD of the kernel restricted to the the kernel of the fairness operator.
\begin{proof}
The rate of consistency depends on the term $(II)$ defined previously. We decompose here into two terms.
\begin{align*} (II) & = -\alpha (\alpha {\rm Id}+ K_F^*K_F)^{-1}(K_F^*K_F)^{-1} K_F^* (K P \varphi_\dagger + KP^\perp \varphi_\dagger) \\
& = (A) \quad + \quad (B).
\end{align*}
First remark that since $P=P^2$
\begin{align*} (A) & =-\alpha (\alpha {\rm Id}+ K_F^*K_F)^{-1}(K_F^*K_F)^{-1} K_F^*K_F P \varphi_\dagger \\
& = -\alpha (\alpha {\rm Id}+ K_F^*K_F)^{-1} P \varphi_\dagger \end{align*}
Assumption {\bf [E1]} provides the rate of decay of this term $\| (A) \|^2$ and enables to prove that it is of order $\alpha^\beta$.\\
For the second term $(B)$, consider the SVD of the operator $K_F=KP$ denoted by $\lambda_j,\psi_j,e_j$ for all $j \geq 1$. So we have that 
\begin{align*} \| (B) \|^2 & = \|  \alpha (\alpha {\rm Id}+ K_F^*K_F)^{-1}(K_F^*K_F)^{-1} K_F^*K P^\perp \varphi_\dagger \|^2 \\
& = \alpha^2 \sum_{j \geq 1} \frac{\lambda_j^2}{\lambda_j^4 (\alpha+ \lambda_j^2)^2} |<K P^\perp \varphi_\dagger ,e_j>
|^2 \\
& = \alpha^2 \sum_{j \geq 1} \frac{\lambda_j^{2\gamma}}{ (\alpha+ \lambda_j^2)^2} \frac{ | <K P^\perp \varphi_\dagger ,e_j>
|^2}{\lambda_j^{2(1+\gamma)} }	 \\
& = O(\alpha^\gamma )
\end{align*} 
To ensure that $$ \sum_{j \geq 1} \frac{ | <K P^\perp \varphi_\dagger ,e_j>|^2}{\lambda_j^{2(1+\gamma)}} < + \infty$$ we assume that 
$$ \sum_{j \geq 1} \frac{ | < P^\perp \varphi_\dagger , \lambda_j \psi_j>|^2}{\lambda_j^{2(1+\gamma)}} = \sum_{j \geq 1} \frac{ | < P^\perp \varphi_\dagger , \psi_j>|^2}{\lambda_j^{2\gamma}} < + \infty$$ where $K^* e_j= \lambda_j \psi_j$, which is ensured under Assumption {\bf [E2]}. Finally the two terms are of order $O(\alpha^\beta + \alpha^\gamma)$, which proves the result.
\end{proof}

In conclusion we have defined two fair approximations of the function $\varphi_\dagger$. The first one is its fair projection $\varphi_F=P\varphi_\dagger$ while the other is the solution of the fair kernel $\varphi_{K_F}$. The two solutions coincide as soon as $$ \varphi_{K_F}-P\varphi_\dagger=(K_F^*K_F)^{-1}K_F^*KP^{\perp} \varphi_\dagger =0. $$ Under assumption [A2], $K^*_FK_F$ is also one to one, hence the difference between both approximations is null only if \begin{equation} \label{eq:lesmemes} K P^\perp \varphi_\dagger =0. \end{equation}
If we consider the case of (IV) regression. This condition is met as soon as 
$$ E ( \varphi(Z,S)|W) - E ( E ( \varphi(Z,S)|Z) | W) =0.$$ This is the case when the sensitive variable $S$ is independent w.r.t to the instrument $W$ conditionally to the characteristics $Z$. Yet in the general case, both functions are different.

\subsection{Approximate fairness}
 Imposing \eqref{eq:fullfair} is a way to ensure complete fairness of the solution of \eqref{eq:inv}. In many cases, this complete fairness leads to bad approximation properties, hence it is replaced by a constraint on the norm of $F \varphi$. Namely we look for the estimator defined as the solution of the optimization 
 \begin{equation} \label{eq:fairoptimnoise}
\hat{\varphi}_{\alpha,\rho} = {\rm arg}\min_{\varphi \in \mathcal{E}} \left( \|\hat{r}-K \varphi\|^2 + \alpha \| \varphi \|^2 +\rho \| F \varphi\|^2 \right)
 \end{equation}
This estimator corresponds to the usual Tikhonov regularized estimator with an extra penalty term $\rho \| F \varphi\|^2 $. The penalty enforces fairness since it enforces $\| F\varphi \|$ to be small which corresponds to a relaxation of the full fairness constraint $F\varphi =0$. The parameter $\rho$ provides a trade-off between the level of fairness which is imposed and the closeness to the usual estimator of the non parametric IV function. \\
We study its asymptotic behaviour in the following theorem. \\ Note first that the solution of \eqref{eq:fairoptimnoise} has a close form and can be written as 
 \[ \hat{\varphi}_{\alpha,\rho}=(\alpha {\rm Id}+ \rho F^*F + K^*K)^{-1}K^* \hat{r}. \] \\
 
The asymptotic behaviour of the estimator is provided by the following theorem. It also ensures that the limit solution of \eqref{eq:fairoptimnoise}, i.e when $\rho \rightarrow + \infty$, is fair in the sense that $\lim_{\rho \rightarrow + \infty} \|F\varphi_{\alpha,\rho} \| =0.$ It converges to the solution of the structural solution restricted to the set of fair functions $\varphi_{K_F}$. \\
We will use the following notations. Consider the collection of operators $$ L_\alpha=(\alpha {\rm Id} + K^*K)^{-1} F^*F $$ 
$$ L=( K^*K)^{-1} F^*F. $$
\begin{itemize}
\item{\bf [A4]} $\mathcal{R}(F^*F) \subset \mathcal{R}(K^*K)$. This condition guarantees that the operators $ L$ and $L_\alpha$ are well defined operators.
\end{itemize}
$L$ is an operator $T: \mathcal{E} \rightarrow \mathcal{E}$ which is not self-adjoint. \\
Consider also the operator 
$$ T = (K^*K)^{-1/2}F^*F (K^*K)^{-1/2} $$ which is an self-adjoint operator which is well defined as soon as 
\begin{itemize}
\item{\bf [A5]} $\mathcal{R}(F^*F) \subset \mathcal{R}(K^*K)^{1/2}$. 
\end{itemize}
If we assume a source condition on the form \begin{itemize} 
\item{\bf [A6]} There exists $\gamma \geq \beta$ $$ F^*F P^\perp \varphi_\dagger \in \mathcal{R} (K^*K)^\frac{\gamma +1}{2}$$ \end{itemize}

 \begin{theorem}[Consistency of fair IV estimator] \label{thm:main} 
 The approximated fair IV estimator $\hat{\varphi}_{\alpha,\rho}$ is an estimator of the fair projection of the structural function, i.e $\varphi_{K_F}$. Its rate of consistency under assumptions [A1] to [A6] is given by
 
 \begin{equation} \label{eq:mainrate} \| \hat{\varphi}_{\alpha,\rho} - \varphi_{K_F} \|^2= O \left( \alpha^\beta + \frac{1}{\rho^2} + \frac{1}{\alpha \delta_n} \right). 
 \end{equation}
 \end{theorem}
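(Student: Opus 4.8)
The plan is to bound $\|\hat{\varphi}_{\alpha,\rho}-\varphi_{K_F}\|$ by a three-term decomposition mirroring the structure of the earlier proofs: a stochastic (variance) term coming from the noise $\hat r - K\varphi_\dagger$, a Tikhonov regularization bias from the $\alpha\|\varphi\|^2$ penalty, and a \emph{fairness-relaxation} bias from using the finite penalty $\rho$ instead of enforcing the exact constraint $F\varphi=0$. Concretely, I would write
\[
\hat{\varphi}_{\alpha,\rho}-\varphi_{K_F}
= \underbrace{(\alpha\,\mathrm{Id}+\rho F^*F+K^*K)^{-1}K^*(\hat r - K\varphi_\dagger)}_{(I)}
\;+\;\underbrace{(\alpha\,\mathrm{Id}+\rho F^*F+K^*K)^{-1}K^*K\varphi_\dagger - \varphi_{K_F}}_{(II)},
\]
using the closed form $\hat{\varphi}_{\alpha,\rho}=(\alpha\,\mathrm{Id}+\rho F^*F+K^*K)^{-1}K^*\hat r$ already established. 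For $(I)$, since $\alpha\,\mathrm{Id}+\rho F^*F+K^*K \succeq \alpha\,\mathrm{Id}+K^*K$ and the extra $\rho F^*F$ term only helps, the operator norm bound $\|(\alpha\,\mathrm{Id}+K^*K)^{-1}K^*\| = O(\alpha^{-1/2})$ gives $\|(I)\|^2 = O\!\left(\tfrac{1}{\alpha}\|\hat r - K\varphi_\dagger\|^2\right) = O\!\left(\tfrac{1}{\alpha\delta_n}\right)$ from $\|U_n\|^2 = O(\delta_n)$, exactly as in the previous theorems.

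The deterministic term $(II)$ is where the real work is. I would first analyze the population analogue $\varphi_{\alpha,\rho}:=(\alpha\,\mathrm{Id}+\rho F^*F+K^*K)^{-1}K^*r$ and show $\varphi_{\alpha,\rho}\to\varphi_{K_F}$ as $\rho\to\infty$, then $\alpha\to 0$, with the advertised rates. The strategy is to split $(II)$ further: write $\varphi_\dagger = P\varphi_\dagger + P^\perp\varphi_\dagger$, note $K^*K P\varphi_\dagger$ interacts with $P$ cleanly (on $\mathcal{N}(F)$, $F^*F$ vanishes), and isolate the $P^\perp\varphi_\dagger$ contribution, which is the only part that "feels" the fairness penalty. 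Using the change of variables through $T=(K^*K)^{-1/2}F^*F(K^*K)^{-1/2}$ (well-defined under [A5]) one diagonalizes the resolvent $(\alpha\,\mathrm{Id}+\rho F^*F+K^*K)^{-1}$ relative to $K^*K$; the $\alpha\to 0$ limit of the pure-$K^*K$ part recovers $\varphi_{K_F}=(PK^*KP)^{-1}PK^*r$ (matching the earlier proposition), and the residual is governed by $\rho^{-1}$ times $F^*F P^\perp\varphi_\dagger$, whose smoothness is precisely quantified by [A6]: the requirement $F^*FP^\perp\varphi_\dagger\in\mathcal{R}(K^*K)^{(\gamma+1)/2}$ with $\gamma\ge\beta$ converts the $\rho^{-1}$ decay into $O(\rho^{-2})$ after squaring (the extra power coming from the self-adjoint functional calculus bound $\sup_{t>0}\frac{t^{(\gamma+1)/2}}{\alpha+\rho t + t}\lesssim\rho^{-1}$ when $\gamma\ge 1$, absorbed together with the $\alpha$-bias). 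The Tikhonov bias from [A3] (equivalently [E1]) contributes the $\alpha^\beta$ term via the standard estimate $\sup_{t\ge0}\frac{\alpha\, t^{\beta/2}}{\alpha+t}=O(\alpha^\beta)$.

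I expect the main obstacle to be the bookkeeping in term $(II)$: the resolvent $(\alpha\,\mathrm{Id}+\rho F^*F+K^*K)^{-1}$ involves \emph{two} non-commuting operators $F^*F$ and $K^*K$, so one cannot diagonalize in a single basis. The cleanest route is the operator identity $A^{-1}-B^{-1}=A^{-1}(B-A)B^{-1}$ used twice — once with $B=\alpha\,\mathrm{Id}+K^*K$ to peel off the $\rho F^*F$ contribution, once with $B=K^*K$ (restricted appropriately to $\mathcal{N}(F)$, as in the $\varphi_{K_F}$ proposition) to peel off $\alpha$ — combined with the similarity transform by $(K^*K)^{-1/2}$ that makes $T$ self-adjoint so the remaining scalar multiplier estimates go through by functional calculus. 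One must also check that assumption [A4] makes all the intermediate operators $L$, $L_\alpha$ well-defined along the way and that the order of limits ($\rho\to\infty$ then $\alpha\to 0$, with $\alpha\delta_n\to\infty$ to kill the variance) is consistent; once the split is set up correctly each individual bound is a routine spectral estimate of the type already carried out for the earlier theorems.
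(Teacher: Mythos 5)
Your proposal follows essentially the same route as the paper's proof: the same variance/bias split from the closed form $(\alpha\,\mathrm{Id}+\rho F^*F+K^*K)^{-1}K^*\hat r$, the same repeated use of the resolvent identity $A^{-1}-B^{-1}=A^{-1}(B-A)B^{-1}$ to peel off first $\alpha$ and then $\rho$, the same similarity transform via the self-adjoint operator $T=(K^*K)^{-1/2}F^*F(K^*K)^{-1/2}$ to identify the $\rho\to\infty$ limit as $\varphi_{K_F}$ and extract the $O(1/\rho^2)$ term, and the same roles for [A3] (the $\alpha^\beta$ bias) and [A6] (controlling $F^*F P^\perp\varphi_\dagger$, giving an $\alpha^\gamma$ contribution absorbed since $\gamma\ge\beta$). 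The plan is correct in outline and matches the paper's argument, including the anticipated bookkeeping difficulties with the two non-commuting operators.
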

The rate of convergence is consistent in the following sense. When we increase the level of imposed fairness to the full fairness constraint, i.e when $\rho$ goes to infinity, for appropriate choices of smoothing parameter $\alpha$, the estimator converges to a full fair function. The rate in $\frac{1}{\rho^2}$ corresponds to the fairness part of the rate. If $\beta$ the Source condition parameter can be chosen large enough such that $\alpha^\beta = \frac{1}{\rho^2}$, hence we recover, for an optimal choice of $\alpha_{\rm opt}$ of order $\delta_n^{-\frac{1}{\beta+1}}$, the usual rate of consistence of non parametric IV estimates $$ \| \hat{\varphi}_{\alpha,\rho} - \varphi_{K_F} \|^2=O \left(\delta_n^{-\frac{\beta}{\beta+1}} \right).$$

 \begin{example}[Linear model, continued.]
In the linear IV model, let $$\varphi_\rho = \left( \rho F^\prime F + \Sigma^\prime_{XW} \Sigma_{XW} \right)^{-1} \Sigma^\prime_{XW} E\left[ W Y\right],$$ the estimator which imposes the approximate fairness constraint. Notice that
\begin{align*}
& \left( \rho F^\prime F + \Sigma^\prime_{XW} \Sigma_{XW} \right)^{-1} \\
\qquad &= \left(\Sigma^\prime_{XW} \Sigma_{XW} \right)^{-1} - \rho \left(\Sigma^\prime_{XW} \Sigma_{XW} \right)^{-1} F^\prime \left( I_q + \rho F \left(\Sigma^\prime_{XW} \Sigma_{XW} \right)^{-1} F^\prime \right)^{-1} F \left(\Sigma^\prime_{XW} \Sigma_{XW} \right)^{-1}\\
\qquad &= \left(\Sigma^\prime_{XW} \Sigma_{XW} \right)^{-1} - \left(\Sigma^\prime_{XW} \Sigma_{XW} \right)^{-1} F^\prime \left( \frac{1}{\rho}I_q + F \left(\Sigma^\prime_{XW} \Sigma_{XW} \right)^{-1} F^\prime \right)^{-1} F \left(\Sigma^\prime_{XW} \Sigma_{XW} \right)^{-1}.
\end{align*}

This decomposition implies that 
\[
\lim_{\rho \rightarrow \infty} \varphi_\rho = \varphi - \left(\Sigma^\prime_{XW} \Sigma_{XW} \right)^{-1} F^\prime \left( F \left(\Sigma^\prime_{XW} \Sigma_{XW} \right)^{-1} F^\prime \right)^{-1} F \varphi,
\]
which directly gives $$\lim_{\rho \rightarrow \infty} F \varphi_\rho = 0.$$ Therefore, as implied by our general theorem, as $\rho$ diverges to $\infty$, the full fairness constraint is imposed.
 \end{example}
 
\begin{remark}
Previous theorems enable to understand the asymptotic behaviour of the fair regularized IV estimator. When $\alpha$ goes to zero but $\rho$ is fixed, it converges towards towards a function $\varphi_\rho$ which differs from the original function $\varphi_\dagger$ that could have been estimated without fairness constraint. Interestingly we point out that the constraint on fairness enables to obtain a fair solution but that the solution is not the fair approximation of the original function $\varphi_\dagger$. Rather the fair solution is obtained by considering the set of approximated solutions which satisfy to the fairness constraint.

 \end{remark}
 
\begin{remark}

The theorem requires an additional assumption denoted by {\bf [A6]}. This assumption aims at controlling the regularity of the {\it unfair} part of the function $\varphi_\dagger$. It is analogous to a source condition imposed on the part of the solution which does not lie in the kernel of the operator which models the set of fair functions, namely $P^\perp \varphi_\dagger$. This condition is obviously fulfilled if $\varphi_\dagger$ is fair since $P^\perp \varphi_\dagger=0$.
\end{remark}

\begin{remark}
The smoothness assumptions we impose in this paper are  source conditions with regularity smaller than 2. Such restrictions come from the choice of  standard Tikhonov's  regularization method. Choosing other methods such as Landwebers's iteration or iterated Tikhonov's regularization would enable to deal with more regular functions, without changing the results presented in this work.
\end{remark}

 Proof of Theorem~\eqref{thm:main}
 \begin{proof}
 Note that the fair IV estimator can be decomposed into a bias and a variance term that will be studied separately
 \begin{align*}
 \hat{\varphi}_{\alpha,\rho} & =(\alpha {\rm Id}+ \rho F^*F + K^*K)^{-1}K^* \hat{r} \\
 & = (\alpha {\rm Id}+ \rho F^*F + K^*K)^{-1}K^*r + (\alpha {\rm Id}+ \rho F^*F + K^*K)^{-1}K^*U_n \\
 & = (B) +( V) .
 \end{align*}
 Then the bias term can be decomposed as
 \begin{align*}
(B) & = [(\alpha {\rm Id}+ \rho F^*F + K^*K)^{-1} - (\rho F^*F + K^*K)^{-1}] K^*r + (\rho F^*F + K^*K)^{-1} K^*r \\
& = (B_1) + (B_2). 
 \end{align*}
 The operator $(\alpha {\rm Id}+ \rho F^*F + K^*K)^{-1}$ can be written as 
 \begin{align*}
 (\alpha {\rm Id}+ \rho F^*F + K^*K)^{-1} & = (R^{-1}_\alpha(K) + \rho F^*F)^{-1} \\
 & = ({\rm Id} + \rho R_\alpha(K)F^*F)^{-1} R_\alpha(K)
 \end{align*}
 Note that condition [A4] ensures that $$L_\alpha := R_\alpha(K)F^*F = (K^*K+\alpha {\rm Id})^{-1} F^*F $$ is a well defined operator on $\mathcal{E}$. Moreover condition [A2] ensures that $R_\alpha(K)$ is one to one hence the kernel of the operator $L_\alpha$ is the kernel of $F$. Hence we have using the Tikhonov approximation \eqref{tiko}
 \begin{align*}
 (B1) & = [({\rm Id} + \rho L_\alpha)^{-1} R_\alpha(K) - ({\rm Id} + \rho L )^{-1} (K^*K)^{-1}] K^* r \\
 & =({\rm Id} + \rho L_\alpha)^{-1} (\varphi_\alpha - \varphi_\dagger) + [({\rm Id} + \rho L_\alpha)^{-1}- ({\rm Id} + \rho L )^{-1}] \varphi_\dagger \end{align*}
We will study each term separately. \vskip .1in
\noindent
$\bullet$ Since $\|({\rm Id} + \rho L_\alpha )^{-1}\|$ is bounded we get that the first term is of order the rate of convergence of $\varphi_\alpha -\varphi_\dagger$. Hence under source condition [A3] we have that
$$ \| ({\rm Id} + \rho L_\alpha )^{-1} (\varphi_\alpha -\varphi_\dagger)\|^2 = O(\alpha^\beta).$$ \vskip .1in
 
\vskip .1in
\noindent $\bullet$ Using that for two operators $$ A^{-1}-B^{-1}= A^{-1}(B-A)B^{-1} $$ we obtain for the second term that
$$ \left( ({\rm Id} + \rho L_\alpha)^{-1} - ({\rm Id} + \rho L )^{-1} \right) \varphi_\dagger = \rho ({\rm Id} + \rho L_\alpha)^{-1}(L-L_\alpha) ({\rm Id} + \rho L )^{-1} \varphi_\dagger. $$
Note that $(L-L_\alpha) P\varphi_\dagger =0 $ and $({\rm Id} + \rho L )^{-1} P \varphi_\dagger = P \varphi_\dagger $ hence we can replace $\varphi_\dagger$ in the last expression by the projection onto the orthogonal space to the kernel, namely $P^\perp \varphi_\dagger .$ Hence 
$$ \| \left( ({\rm Id} + \rho L_\alpha)^{-1} - ({\rm Id} + \rho L )^{-1} \right) \varphi_\dagger \|^2 =O \left( \rho^2 \| L-L_\alpha \|^2 \| ({\rm Id} + \rho L )^{-1} P^\perp \varphi_\dagger \|^2 \right)$$
We have that $ ({\rm Id} + \rho L )^{-1} P^\perp \varphi_\dagger \|^2= O (1/\rho^2)$. 
Then $$ L-L_\alpha = \alpha (\alpha {\rm Id} + K^*K)^{-1} ( K^*K)^{-1} F^*F .$$
Under Assumption {\bf [E6]}, We obtain that $( K^*K)^{-1} F^*F P^\perp \varphi_\dagger$ is of regularity $\gamma$ so 
$$\| (L-L_\alpha ) P^\perp \varphi_\dagger \|^2 =O \left( \alpha^\gamma \right).$$
Hence we can conclude that 
$$ \| \left( ({\rm Id} + \rho T_\alpha)^{-1} - ({\rm Id} + \rho T )^{-1} \right) \varphi_\dagger \|^2= O \left( \alpha^\gamma \right).$$
The second term $(B_2)$ is such that $(B_2)= (\rho F^*F + K^*K)^{-1} K^*r.$ We can write
\begin{align*}
(B_2) & = \left((K^*K)^{1/2} ({\rm Id} + \rho (K^*K)^{-1/2}F^*F (K^*K)^{-1/2}) (K^*K)^{1/2} \right)^{-1} K^*K\varphi_\dagger \\
& = (K^*K)^{-1/2} ({\rm Id} +\rho T)^{-1} (K^*K)^{1/2} \varphi_\dagger, 
\end{align*}
where $T:= (K^*K)^{-1/2}F^*F (K^*K)^{-1/2} $ is a self-adjoint operator well defined using Assumption {\bf [A5]}. Let $$ \varphi_\rho= (K^*K)^{-1/2} ({\rm Id} +\rho T)^{-1} (K^*K)^{1/2} \varphi_\dagger.$$
\begin{itemize}
\item Note first that $\varphi_\rho$ converges when $\rho \rightarrow + \infty$ to the the projection of $\psi:= (K^*K)^{1/2} \varphi_\dagger$ onto ${\rm Ker} (T).$ As a matter of fact we can write the SVD of $T$ as $\lambda_j^2$ and $e_j$ for $j \geq 1$. So we get that
\begin{align*}
({\rm Id} +\rho T)^{-1} \psi& = \sum_{j \geq 1} \frac{1}{1+ \rho \lambda_j^2} <\psi,e_j> \\
 & = \sum_{j \geq 1, \lambda_{j} \neq 0 } \frac{1}{1+ \rho \lambda_j^2} <\psi,e_j> e_j + \sum_{j \geq 1, \lambda_{j} = 0 } <\psi, e_{j}> e_{j} .
 \end{align*}
 The last quantity converges when $\rho \rightarrow + \infty$ towards the projection of $\psi$ onto the kernel of $T$. Applying the operator $(K^*K)^{-1/2}$ does not change the limit since $K^*K$ is one to one.
\item Note then that the kernel of the operator $T$ can be identified as follows
\begin{align*}
 \{ \psi \in {\rm Ker}(T) \} & =\{ \psi, \quad F (K^*K)^{-1/2} \psi=0 \} \\
 & =\{ \psi, \quad (K^*K)^{-1/2} \psi \in {\rm Ker}(F) \} \\
 & = \{ \psi=(K^*K)^{1/2} \varphi, \quad \varphi \in {\rm Ker}(F) \}. 
\end{align*}
Hence $\varphi_\rho$ converges towards the projection of $(K^*K)^{1/2} \varphi_\dagger$ onto the functions $(K^*K)^{1/2} \varphi$ with $ \varphi \in {\rm Ker}(F)$.
\item Characterization of the projection. Note that the projection can be written as
\begin{align*}
& {\rm arg}\min_{\varphi \in {\rm Ker}(F)} \| (K^*K)^{1/2} \varphi_\dagger - (K^*K)^{1/2} \varphi \|^2 \\
& = {\rm arg}\min_{\varphi \in {\rm Ker}(F)} \| (K^*K)^{1/2} (\varphi_\dagger - \varphi) \|^2 \\
& = {\rm arg}\min_{\varphi \in {\rm Ker}(F)} < (K^*K)^{1/2} (\varphi_\dagger - \varphi), (K^*K)^{1/2} (\varphi_\dagger - \varphi)> \\
& = {\rm arg}\min_{\varphi \in {\rm Ker}(F)} <\varphi_\dagger - \varphi, (K^*K)(\varphi_\dagger - \varphi)> \\
& = {\rm arg}\min_{\varphi \in {\rm Ker}(F)} \| K(\varphi_\dagger - \varphi) \|^2 \\
& = {\rm arg}\min_{\varphi \in {\rm Ker}(F)} \| r - K \varphi \|^2 \\
& = \varphi_{K_F}
\end{align*} as defined previously.
\item Finally usual bounds enable to prove that $$ \| \varphi_\rho- \varphi_{K_F} \|^2 = O \left(\frac{1}{\rho^2} \right).$$
\end{itemize}
Using all previous bounds, we can write 
 \begin{equation} \label{eq:inbetween} \| (B)-P\varphi_\dagger \|^2=O(\frac{1}{\rho^2}+ \alpha^\beta +\alpha^\gamma ) .\end{equation}

 Finally we prove that the variance term $(V)$ is such that 
 
 \[ \| (\alpha {\rm Id}+ \rho F^*F + K^*K)^{-1}K^*U_n \|^2 = O\left(\frac{1}{\alpha \delta_n} \right)
 \]
 Actually, using previous notations, we get that 
 \begin{align*}
 \| (\alpha {\rm Id}+ \rho F^*F + K^*K)^{-1}K^*U_n \|& = \|({\rm Id} + \rho L_\alpha)^{-1} (\alpha {\rm Id} + K^*K)^{-1} K^* U_n\| \\
 & \leq \| ({\rm Id} + \rho L_\alpha)^{-1} \| \| (\alpha {\rm Id} + K^*K)^{-1} K^* \| \|U_n \| \\
 & \leq \| ({\rm Id} + \rho L_\alpha)^{-1} \| \frac{1}{\alpha} \frac{1}{\delta_n^{1/2}}.
 \end{align*}
 Using that $ ({\rm Id} + \rho L_\alpha)^{-1} $ is bounded leads to the desired result. \\
Both bounds prove the final result for the theorem.
 \end{proof}
 Choosing the fairness constraint implies to modify the usual IV estimator. The following theorem quantifies at fixed $\rho$ and $\alpha$ the deviation of the fair IV estimator \eqref{eq:fairoptimnoise} with respect to the unfair natural solution of the IV problem. 
 \begin{theorem}[Price for fairness]
 \[ \| \varphi_\alpha- \varphi_{\alpha,\rho} \| =O \left( \frac{\rho}{\alpha^2} \right)\]
 \end{theorem}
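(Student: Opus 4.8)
The plan is to reduce the difference to a single operator applied to $r$ by means of the resolvent identity already used above, and then to bound each factor by elementary operator-norm estimates, with no appeal to any source condition. First I would recall the two closed forms, $\varphi_\alpha = (\alpha {\rm Id} + K^*K)^{-1} K^* r = R_\alpha(K) K^* r$ and $\varphi_{\alpha,\rho} = (\alpha {\rm Id} + \rho F^*F + K^*K)^{-1} K^* r$. Writing $A = \alpha {\rm Id} + K^*K$ and $B = \alpha {\rm Id} + \rho F^*F + K^*K$, so that $B - A = \rho F^*F$, the identity $A^{-1} - B^{-1} = A^{-1}(B-A)B^{-1}$ gives
\[
\varphi_\alpha - \varphi_{\alpha,\rho} = \rho\, A^{-1} F^*F\, B^{-1} K^* r = \rho\, R_\alpha(K)\, F^*F\, \varphi_{\alpha,\rho}.
\]

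It then remains to estimate the norm factor by factor. Since $K^*K$ and $\rho F^*F$ are nonnegative self-adjoint operators, $A$ and $B$ are self-adjoint with spectrum bounded below by $\alpha$, so $\|R_\alpha(K)\| = \|A^{-1}\| \le \alpha^{-1}$ and $\|B^{-1}\| \le \alpha^{-1}$. The fairness operator $F$ is bounded, hence $\|F^*F\| = \|F\|^2$ is a finite constant; and, under [A1], $r = K\varphi_\dagger$, so $\|K^* r\| \le \|K\|\,\|r\| < \infty$. Combining these,
\[
\| \varphi_\alpha - \varphi_{\alpha,\rho} \| \le \rho\,\|R_\alpha(K)\|\,\|F^*F\|\,\|B^{-1}\|\,\|K^* r\| \le \frac{\rho\,\|F\|^2\,\|K\|\,\|r\|}{\alpha^{2}},
\]
which is exactly $O(\rho/\alpha^2)$. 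The same chain of inequalities, with $r$ replaced by $\hat r$, yields the analogous bound for $\hat{\varphi}_\alpha - \hat{\varphi}_{\alpha,\rho}$.

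There is no genuine obstacle here; the only point worth flagging is that the bound is deliberately crude. It holds uniformly in the unknown $\varphi_\dagger$ and uses none of the smoothness conditions [A3], [E1]--[E2] or [A6], which is why it deteriorates like $\alpha^{-2}$ rather than matching the consistency rates obtained earlier; consequently it is informative only in the regime $\rho = o(\alpha^2)$, i.e. when the fairness penalty is weak relative to the Tikhonov penalty. A sharper, $\varphi_\dagger$-dependent estimate could be obtained by writing $\varphi_{\alpha,\rho} = B^{-1}K^*K\varphi_\dagger$, using that $K^*K$ is dominated by $B$ together with a source-type control on $F^*F P^\perp \varphi_\dagger$, but this refinement is not needed for the stated rate.
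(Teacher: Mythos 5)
Your proposal is correct and follows essentially the same route as the paper's own proof: both apply the resolvent identity $A^{-1}-B^{-1}=A^{-1}(B-A)B^{-1}$ with $A=\alpha\,{\rm Id}+K^*K$ and $B=\alpha\,{\rm Id}+\rho F^*F+K^*K$, then bound each inverse by $\alpha^{-1}$ and use boundedness of $F^*F$ and of $K^*r$. Your write-up is in fact slightly more careful than the paper's (which drops the $K^*r$ factor from its displayed inequality before reinstating it via $\|K^*r\|\leq M$), and your closing remark on the regime $\rho=o(\alpha^2)$ is a reasonable observation, but there is no substantive difference in method.
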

 
 \begin{proof}
 \begin{align*}
& \| \varphi_\alpha- \varphi_{\alpha,\rho} \| \\
& \leq \| (\alpha {\rm Id} + K^*K)^{-1}K^* r - (\alpha {\rm Id}+ \rho F^*F + K^*K)^{-1}K^* r \| \\
& \leq \| [(\alpha {\rm Id} + K^*K)^{-1} - (\alpha {\rm Id}+ \rho F^*F + K^*K)^{-1}] K^*r \|.
\end{align*}
Using that for two operators $$ A^{-1}-B^{-1}= A^{-1}(B-A)B^{-1} $$ we obtain
 \begin{equation}
 \| \varphi_\alpha- \varphi_{\alpha,\rho} \| \leq \| (\alpha {\rm Id}+ \rho F^*F + K^*K)^{-1} \rho F^*F (\alpha {\rm Id} + K^*K)^{-1} \|
 \end{equation}
 Now using that \begin{align*}
 \|(\alpha {\rm Id} + K^*K)^{-1} \| & \leq \frac{1}{\alpha} \\
 (\alpha {\rm Id}+ \rho F^*F + K^*K)^{-1} & \leq \frac{1}{\alpha} 
 \end{align*} and since $$ \| K^* r \| \leq M $$ leads to the result.
 \end{proof}
 
 Previous theorem suggests that in a decision procedure, the stakeholder should choose make a choice : imposing fairness conditions and obtaining an approximately  {\it fair} solutions provides a different solution than the usual estimates, more different as $\rho$, the weight put on the fairness penalty, increases. This cost for changing previous uses for a new fair model, could be included in the decision process as soon as we could define an  the economic value for fairness. In this framework, this would provide a balance between similarity with the {\it unfair} usual model and the desired level of fairness that  could be used to optimize the choice of the trade-off parameter $\rho$.
 
\section{Estimation with an exogenous binary sensitive attribute} \label{sec:estimation}

We discuss the estimation and the finite sample implementation of our method in the simple case when $S$ is an exogenous binary random variable (for instance, gender or race), and $Z \in \IR^p$ only contains continuous endogenous regressors. This framework can be easily extended to the case when $S$ is an endogenous multivariate categorical variable and to include additional exogenous components in $Z$ \cite{hall2005,centorrino2013a,centorrino2017}. 
Our statistical model can be written as 

\begin{equation} \label{eq:statmod1}
Y = \varphi_0(Z) + \varphi_1(Z) S + U = {\bf S}^\prime \varphi(Z) + U, 
\end{equation}
where $\varphi = [\varphi_0 \quad \varphi_1]^\prime$, and $\mathbf{S} = [1 \quad S]^\prime$. 

This model is a varying coefficient model see, among others,\cite{HASTIE_TIBSHIRANI:1993,FAN_ZHANG:1999,LI_HUANG_LI_FU:2002}. Adopting the terminology that is used in this literature, we refer to ${\bf S}$ as the `linear' variables (or predictors), and to the $Z$'s as the `smoothing' variables (or covariates) \cite{FAN_ZHANG:2008}. When $Z$ is endogenous, \cite{centorrino2017} have studied identification and estimation of this model with instrumental variables. That is, we assume there is a random vector $W \in \IR^q$, such that $E\left[ {\bf S} U \vert W\right] = 0$, and
\begin{equation}\label{eq:completevc}
E\left[ {\bf S}{\bf S}^\prime \varphi(Z) \vert W \right] = 0 \quad \Rightarrow \quad \varphi = 0,
\end{equation}
where equalities are intended almost surely. {Notice that the moment conditions $E\left[ {\bf S} U \vert W\right] = 0$ are implied by the assumption that $E\left[ U \vert W,S\right] = 0$, although they allow to exploit the semiparametric structure of the model, and reduce the curse of dimensionality \cite{centorrino2017}. The \textit{completeness} condition in equation \eqref{eq:completevc} is a necessary condition for identification, and it is assumed to hold. As proven in \cite{centorrino2017}, this condition is implied by the injectivity of the conditional expectation operator (see our Assumption A2), and by the matrix $E[{\bf S}{\bf S}^\prime \vert z,w]$ being full rank for almost every $(z,w)$.\

We would like to obtain a nonparametric estimator of the functions $\lbrace \varphi_0,\varphi_1 \rbrace$ when a fairness constraint is imposed. We use the following operator's notations
\begin{align*}
\left( K_s\varphi \right) (w)=& E\left[ \mathbf{S}\mathbf{S}^\prime \varphi(Z) \vert W = w \right] \\
\left( K_s^\ast \psi \right) (z)=& E\left[ \mathbf{S}\mathbf{S}^\prime \psi(W) \vert Z = z\right]\\
\left( K^\ast \psi \right) (z)=& E\left[ \psi(W) \vert Z = z\right],
\end{align*}
for every $\varphi \in L^2(Z)$, and $\psi \in L^2(W)$.

When no fairness constraint is imposed, the regularized approximation to the pair $\lbrace \varphi_0,\varphi_1 \rbrace$ is given by 
\begin{equation} \label{eq:minprob}
\varphi_{\alpha} = {{\rm arg}\min}_{\varphi \in L^2(Z)} \Vert K_s \varphi - r \Vert^2 + \alpha \Vert \varphi \Vert^2,
\end{equation}
where $\Vert \varphi \Vert^2 = \Vert \varphi_0 \Vert^2 + \Vert \varphi_1 \Vert^2$. That is
\begin{equation}\label{eq:phialphavc}
\varphi_{\alpha} = \left( \alpha I + K_s^\ast K_s \right)^{-1} K_s^\ast r, 
\end{equation}
with $r(w) = E\left[ {\bf S} Y \vert W = w\right]$.

As in \cite{centorrino2017}, the quantities in equation \eqref{eq:phialphavc} can be replaced by consistent estimators. Let $\lbrace (Y_i,X_i,W_i), i = 1,\dots,n \rbrace$ an iid sample from the joint distribution of $(Y,X,W)$. We denote by 
\[
{\bf Y}_n =\begin{bmatrix} Y_1 \\ Y_2 \\ \vdots \\ Y_n \end{bmatrix}, \qquad {\bf S}_n =\begin{bmatrix} I_n & diag(S_1,S_2,\dots,S_n) \end{bmatrix},
\]
the $n \times 1$ vector which stacks the observations of the dependent variable and the $n \times 2n$ matrix of predictors, where $I_n$ is the identity matrix of dimension $n$, and $diag(S_1,S_2,\dots,S_n)$ is a $n\times n$ diagonal matrix, whose diagonal elements are equal to the sample observations of the sensitive attribute $S$. Similarly, we let
\[
{\bf D}_{1,n} = \begin{bmatrix} S_1 \\ S_2 \\ \vdots \\ S_n \end{bmatrix}, \text{ and } {\bf D}_{0,n} = \begin{bmatrix} 1 - S_1 \\ 1 - S_2 \\ \vdots \\ 1- S_n \end{bmatrix},
\]
two $n \times 1$ vectors stacking the sample observations of $S$ and $1 - S$.

Finally, let $C(\cdot)$ a univariate kernel function, such that $C(\cdot) \geq 0$, and $\int C(u) du = 1$, and ${\bf C}(\cdot)$ be a multivariate product kernel. That is, for a vector ${\bf u}=\begin{bmatrix} u_1 & u2 & \dots & u_p \end{bmatrix}^\prime$, with $p \geq 1$, ${\bf C}({\bf u}) = C(u_1) \times C(u_2) \times \dots \times C(u_p)$. 

As detailed in \cite{centorrino2013a}, the operators $K$ and $K^\ast$ can be approximated by finite dimensional matrices of kernel weights. In particular, we have that 
\begin{equation*}
\underbrace{\hat{K}}_{n \times n} = \begin{bmatrix} {\bf C} \left( \frac{W_i - W_j}{h_W}\right) \end{bmatrix}^n_{i,j= 1} \text{ and } \underbrace{\widehat{K^\ast}}_{n \times n} = \begin{bmatrix} {\bf C} \left( \frac{Z_i - Z_j}{h_Z}\right) \end{bmatrix}^n_{i,j= 1},
\end{equation*}
where $h_W$ and $h_X$ are bandwidth parameters. Therefore,
\begin{align*}
\hat{r} =& vec \left( (I_2 \otimes \hat{K}) {\bf S}^\prime_n {\bf Y}_n\right)\\
\hat{K}_s =& (I_2 \otimes \hat{K}) {\bf S}_n^\prime {\bf S}_n\\
\widehat{K^\ast}_s =& (I_2 \otimes \widehat{K^\ast}) {\bf S}_n^\prime {\bf S}_n,
\end{align*}
in a way that 
\begin{equation}\label{eq:vhatunc}
\hat\varphi_{\alpha} = \begin{bmatrix} \hat\varphi_{0,\alpha} & \hat\varphi_{1,\alpha} \end{bmatrix} = \left( vec(I_n)^\prime \otimes I_n \right)\left( I_n \otimes \left( \alpha I + \widehat{K^\ast}_s \hat{K}_s \right)^{-1} \widehat{K^\ast}_s \hat{r} \right). 
\end{equation}

As explained above, the fairness constrain can be characterized by a linear operator $F_j$, such that $F_j \varphi = 0$, where $j = \lbrace 1,2 \rbrace$. In case of definition \ref{def:fairness1}, and exploiting the binary nature of $S$, the operator $F_1$ can be approximated by
\[
\underbrace{F_{1,n}}_{2n \times 2n} = \begin{bmatrix} {\bf 0}_n & {\bf 0}_n \\ \iota_n \left[ \left({\bf D}^\prime_{1,n} {\bf D}_{1,n} \right)^{-1} {\bf D}^\prime_{1,n} - \left({\bf D}^\prime_{0,n} {\bf D}_{0,n} \right)^{-1} {\bf D}^\prime_{0,n} \right] & \iota_n \left({\bf D}^\prime_{1,n} {\bf D}_{1,n} \right)^{-1} {\bf D}^\prime_{1,n} \end{bmatrix},
\]
where $\iota_n$ is a $n \times 1$ vector of ones, and ${\bf 0}_n$ is a $n \times n$ matrix of zeroes. 

In the case of definition \ref{def:fairness2}, the fairness operator can be approximated by 
\[
\underbrace{F_{2,n}}_{2n \times 2n} = \begin{bmatrix} {\bf 0}_n & {\bf 0}_n \\ {\bf 0}_n & I_n \end{bmatrix},
\]
In both cases, when the function $\varphi \in \mathcal{F}_j$, we obviously have that $F_j vec(\varphi) = 0$, with $j = \lbrace 1,2 \rbrace$.

As detailed in Section \ref{sec:fullfair}, and for $j = \lbrace 1,2 \rbrace$, the estimator consistent with the fairness constraint can be obtained in several ways
\begin{enumerate}
\item[1)] By projecting the unconstrained estimator in \eqref{eq:vhatunc} onto the null space of $F_j$. Let $P_{j,n}$ be the estimator of such projection, then we have that
\begin{equation} \label{eq:vhatcon1}
\hat\varphi_{\alpha,F,j} = \left( vec(I_n)^\prime \otimes I_n \right)\left( I_n \otimes P_{j,n} vec(\hat\varphi_{\alpha}) \right),
\end{equation}
\item[2)] By restricting the conditional expectation operator to project onto the null space of $F_j$. Let
\begin{align*}
\hat{K}_{F,j,s} = \hat{K}_{s} P_{j.n}, \text{ and } \widehat{K^\ast}_{F,j,s} = P_{j,n} \widehat{K^\ast}_{s},
\end{align*}
then 
\begin{equation} \label{eq:vhatcon2}
\hat\varphi_{\alpha,K_F,j} = \left( vec(I_n)^\prime \otimes I_n \right)\left( I_n \otimes \left( \alpha I + \widehat{K^\ast}_{F,j,s} \hat{K}_{F,j,s} \right)^{-1} \widehat{K^\ast}_{F,j,s} \hat{r} \right),
\end{equation}
\item[3)] By modifying the objecting function to include an additional term which penalizes deviations from fairness. That is, we let 
\[
\hat\varphi_{\alpha,\rho,j}={{\rm arg}\min}_{\varphi \in \mathcal{F}_j} \Vert \hat{K}_s \varphi - \hat{r} \Vert^2 + \alpha \Vert \varphi \Vert^2 + \rho \Vert F_{j,n}\varphi \Vert^2,
\]
in a way that
\begin{equation}\label{eq:vhatcon3}
\hat\varphi_{\alpha,\rho,j}= \left( \alpha I_n + \rho F_{j,n} F_{j,n} + \widehat{K^\ast}_s \hat{K}_s \right)^{-1} \widehat{K^\ast}_s \hat{r}.
\end{equation}
For $\rho = 0$, this estimator is equivalent to the unconstrained estimator, $\hat\varphi_{\alpha}$, and, for $\rho$ sufficiently large it imposes the full fairness constraint.
\end{enumerate}

To implement the estimators above, we need to select several smoothing, $\lbrace h_W,h_X \rbrace$, and regularization, $\lbrace \alpha,\rho \rbrace$, parameters. For the choice of the tuning parameters $\lbrace h_W,h_X,\alpha \rbrace$, we follow \cite{centorrino2013} and use a sequential leave-one-out cross-validation approach. We instead select the regularization parameter $\rho$, for $j = \lbrace 1,2 \rbrace$ as

\begin{equation} \label{eq:mincritrho}
\rho^\ast_j = {{\rm arg}\min}_{\rho} \Vert \hat\varphi_{\alpha,\rho,j} - \hat\varphi_{\alpha} \Vert^2 + \varsigma \Vert F_{j,n} \hat\varphi_{\alpha,\rho,j} \Vert^2,
\end{equation}

with $\varsigma > 0$ a constant. The first term of this criterion function is a statistical loss that we incur into when we impose the fairness constraint. The second term instead represents the distance of our estimator to full fairness. The smaller the norm of the second term, the closer we are to obtain a \textit{fair} estimator. For instance, if our unconstrained estimator, $\hat\varphi_{\alpha}$ is \textit{fair}, then the second term will be identically zero for any value of $\rho$, while the first term will be zero for $\rho = 0$, and then would increase as $\rho \rightarrow \infty$. The constant $\varsigma$ serves as a subjective weight for fairness. In principle, one could set $\varsigma = 1$. Values of $\varsigma$ higher than $1$ imply that the decision maker considers deviations from fairness to be very costly and thus prefers them to be penalized more heavily. The opposite is true for values of $\varsigma < 1$. 

\section{An illustration} \label{sec:illustration}

We consider the following illustration of the model described in the previous section. We generate a random vector $\tau = (\tau_1,\tau_2)^\prime$ from a bivariate normal distribution with mean $(0,0.5)^\prime$ and covariance matrix equal to
\[
\Sigma_\tau = \begin{bmatrix} 1 & 2\sin(\pi/12) \\ 2\sin(\pi/12) & 1 \end{bmatrix}.
\]

Then, we fix
\begin{align*}
W =& -1 + 2 \Phi(\tau_1) \\
S =& B(\Phi(\tau_2)),
\end{align*}
where $B(\cdot)$ is a Bernoulli distribution with probability parameter equal to $\Phi(\tau_2)$, and $\Phi$ is the cdf of a standard normal distribution.

We then let $\eta$ and $U$ to be independent normal random variables with mean $0$ and variances equal to $0.16$ and $0.25$, respectively, and we generate
\begin{align*}
Z = -1 + 2\Phi\left( W - 0.5S - 0.5WS + 0.5U + \eta \right),
\end{align*}
and
\[
Y = \varphi_0(Z)+ \varphi_1(Z) S + U,
\]
where $\varphi_0(Z) = 3Z^2$, and
\[
\varphi_1(Z) = 1 - 5Z^3.
\]
In this illustration, the random variable $Z$ can be thought to be an observable characteristic of the individual, while $S$ could be a sensitive attribute related, for instance, to gender, or ethnicity. Notice that the true regression function is \textit{not fair} in the sense of either Definition \ref{def:fairness1} or Definition \ref{def:fairness2}. This reflects the fact that real data may contain a bias with respect to the sensitive attribute, which is often the case in practice. We fix the sample size at $n= 1000$, and we use Epanechnikov kernels for estimation.

\begin{figure}[!h]
\centering
\includegraphics[scale=.5]{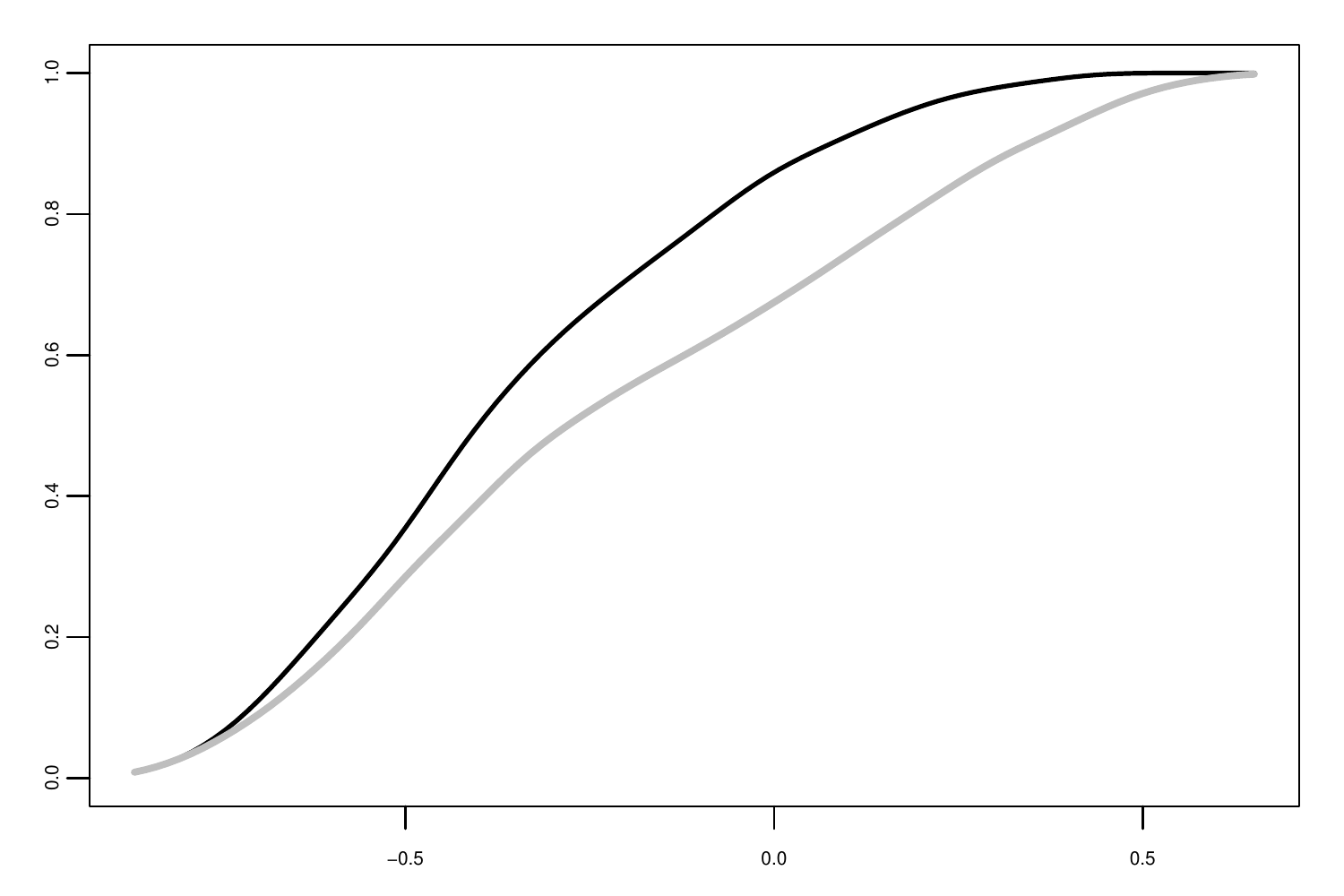}
\caption{Empirical CDF of the endogenous regressor $Z$, conditional of the sensitive attribute $S$. CDF of $Z \vert S = 0$, solid grey line; CDF of $Z \vert S = 1$, solid black line.}
\label{fig:fct_densz}
\end{figure}

In Figure \ref{fig:fct_densz}, we plot the empirical cumulative distribution function (CDF) of $Z$ given $S=0$ (solid grey line), and of $Z$ given $S=1$ (solid black line). We can see that the latter stochastically dominates the former. This can be interpreted as the fact that systematic differences in group's characteristics that can generate systematic differences in the outcome, $Y$, even when the sensitive attribute $S$ is not directly taken into account.

We compare the unconstrained estimator, $\hat\varphi_\alpha$, with the \textit{fairness-constrained} estimators in the sense of Definitions \ref{def:fairness1} and \ref{def:fairness2}.

\begin{figure}[!h]
\centering
\subfigure[$\varphi_0(x) = 3x^2$]{
\includegraphics[scale=.35]{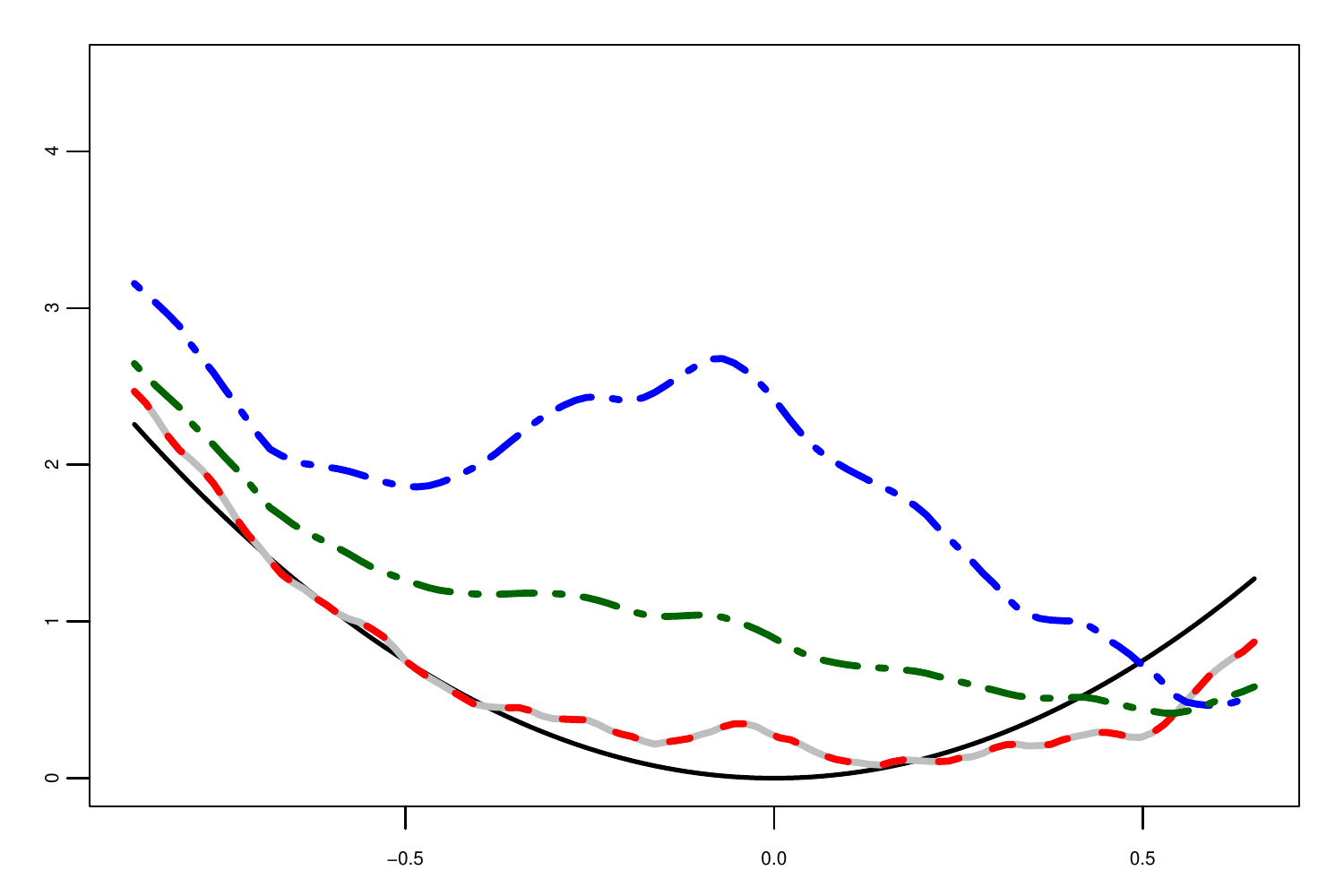}
}
\subfigure[$\varphi_1(x) = 1 - 5x^3$]{
\includegraphics[scale=.35]{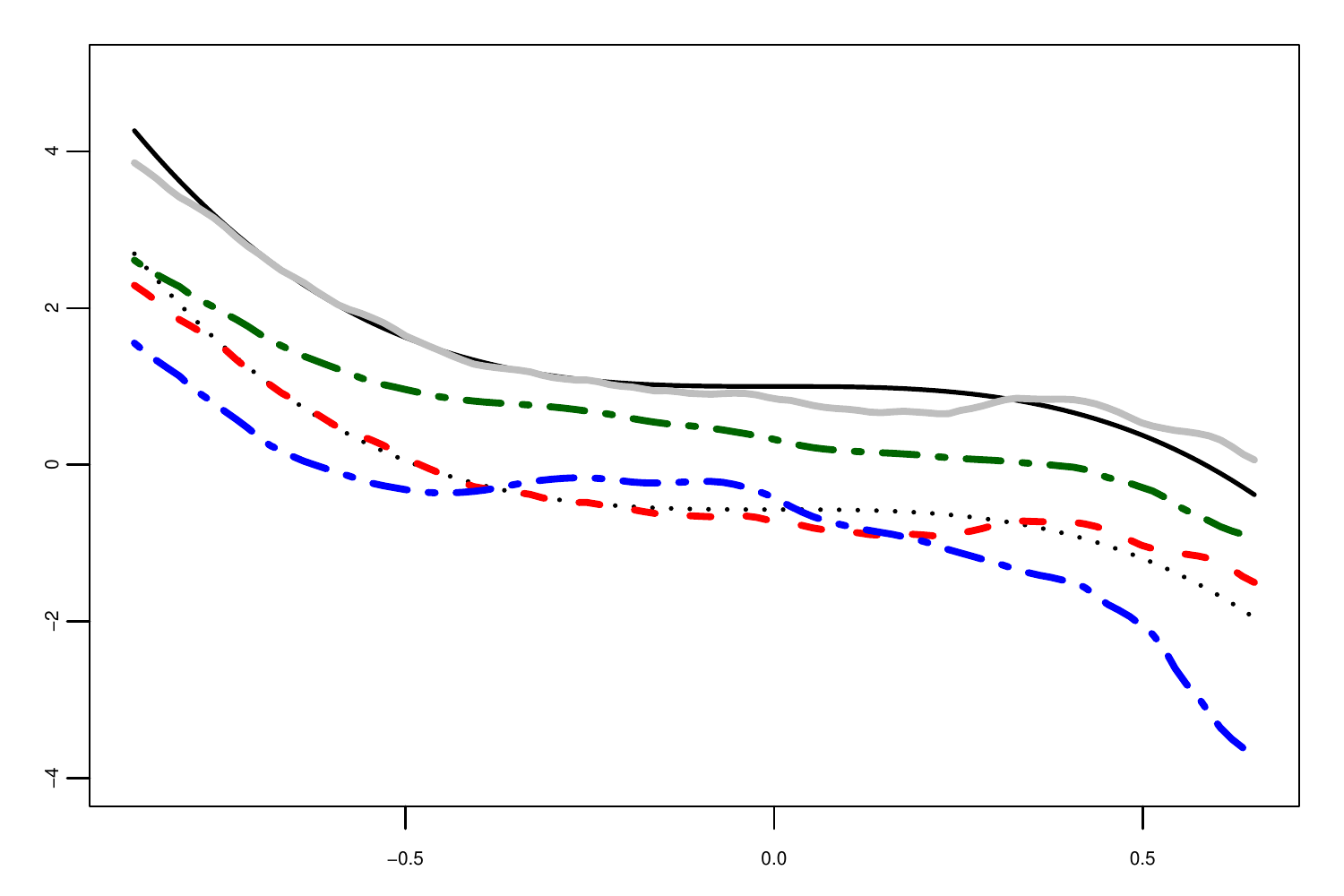}
}
\caption{Estimation using the definition of fairness in \ref{def:fairness1}. Black line, true function; dotted black line, true function with fairness constraint; gray line, $\hat\varphi_\alpha$; dashed red line, $\hat\varphi_{\alpha,F}$; dash blue line, $\hat\varphi_{\alpha,K_F}$, ; dash green line, $\hat\varphi_{\alpha,\rho}$.}
\label{fig:fct_sim1}
\end{figure}

\begin{figure}[!h]
\centering
\subfigure[$\varphi_0(x) = 3x^2$]{
\includegraphics[scale=.36]{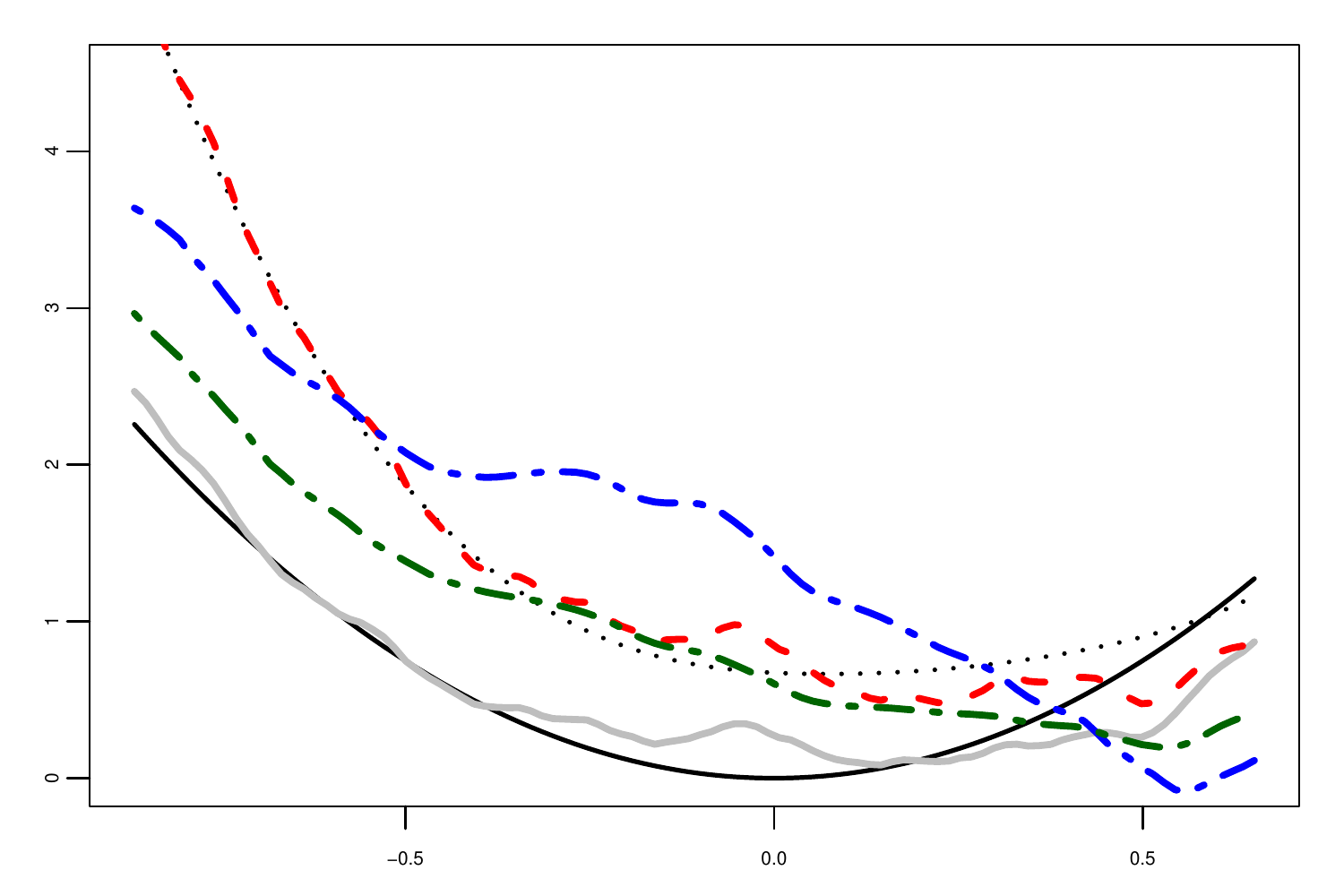}
}
\subfigure[$\varphi_1(x) = 1 - 5x^3$]{
\includegraphics[scale=.36]{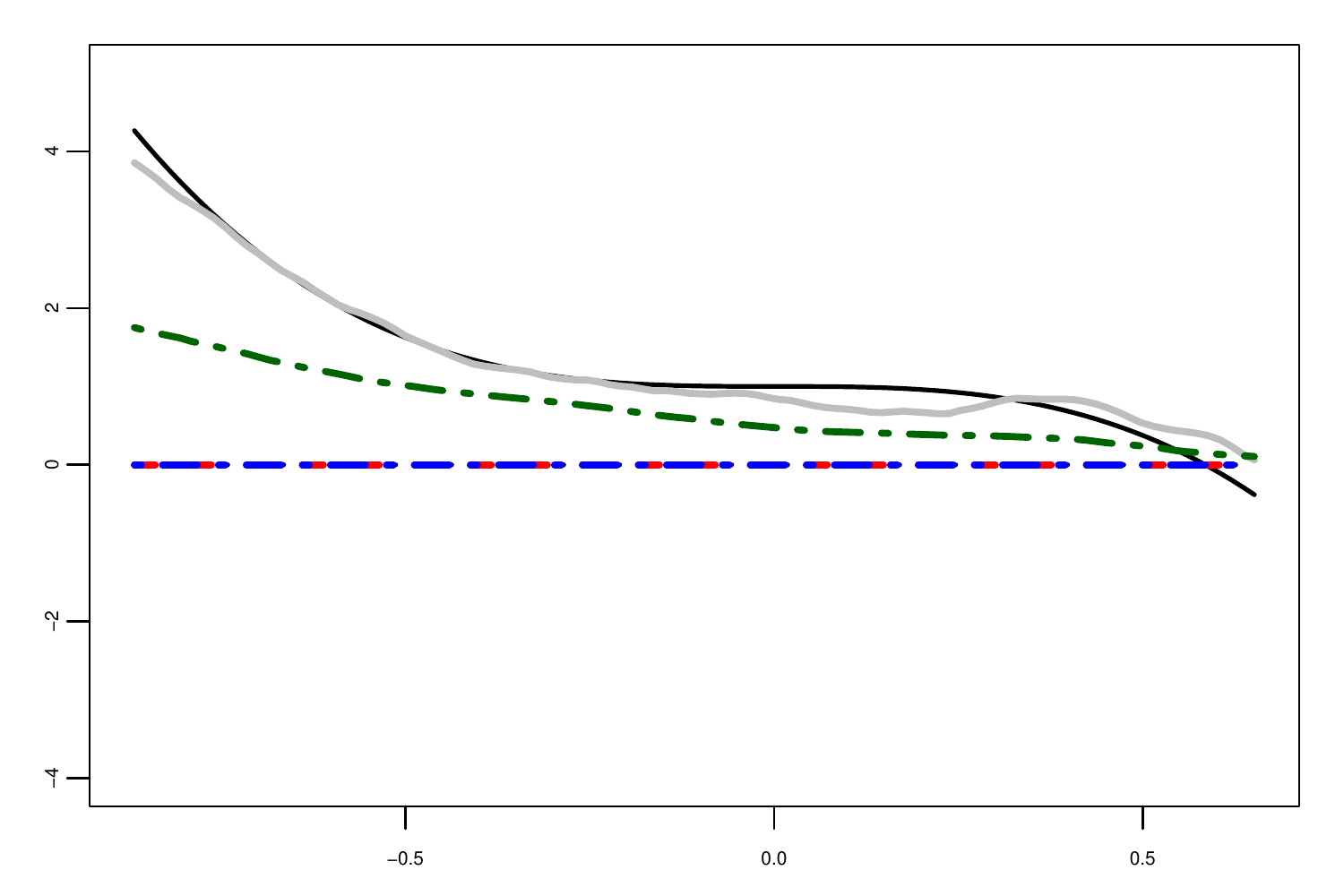}
}
\caption{Estimation using the definition of fairness in \ref{def:fairness2}. Black line, true function; dotted black line, true function with fairness constraint; gray line, $\hat\varphi_\alpha$; dashed red line, $\hat\varphi_{\alpha,F}$; dash blue line, $\hat\varphi_{\alpha,K_F}$; dash green line, $\hat\varphi_{\alpha,\rho}$.}
\label{fig:fct_sim2}
\end{figure}

In Figures \ref{fig:fct_sim1} and \ref{fig:fct_sim2}, we plot the estimators of the functions $\lbrace \varphi_0,\varphi_1 \rbrace$, under the fairness constraints in Definitions \ref{def:fairness1} and \ref{def:fairness2}, respectively. Notice that, as expected, the estimator which imposes approximate fairness through the penalization parameter $\rho$ lays somewhere in between the unconstrained estimator, and the estimators which impose full fairness. 

\begin{figure}[!h]
\centering
\subfigure[Definition \ref{def:fairness1}]{
\includegraphics[scale=.36]{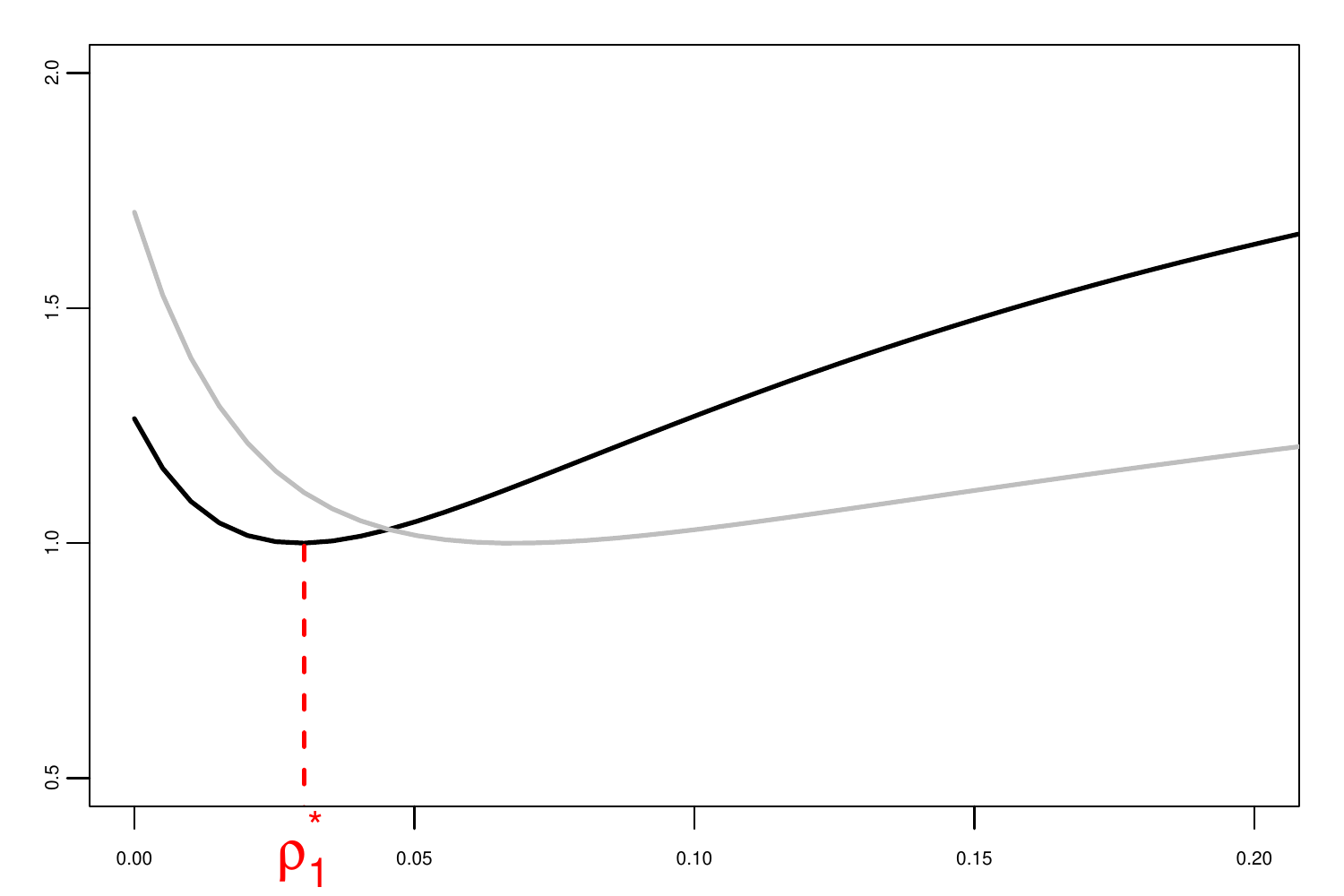}
}
\subfigure[Definition \ref{def:fairness2}]{
\includegraphics[scale=.36]{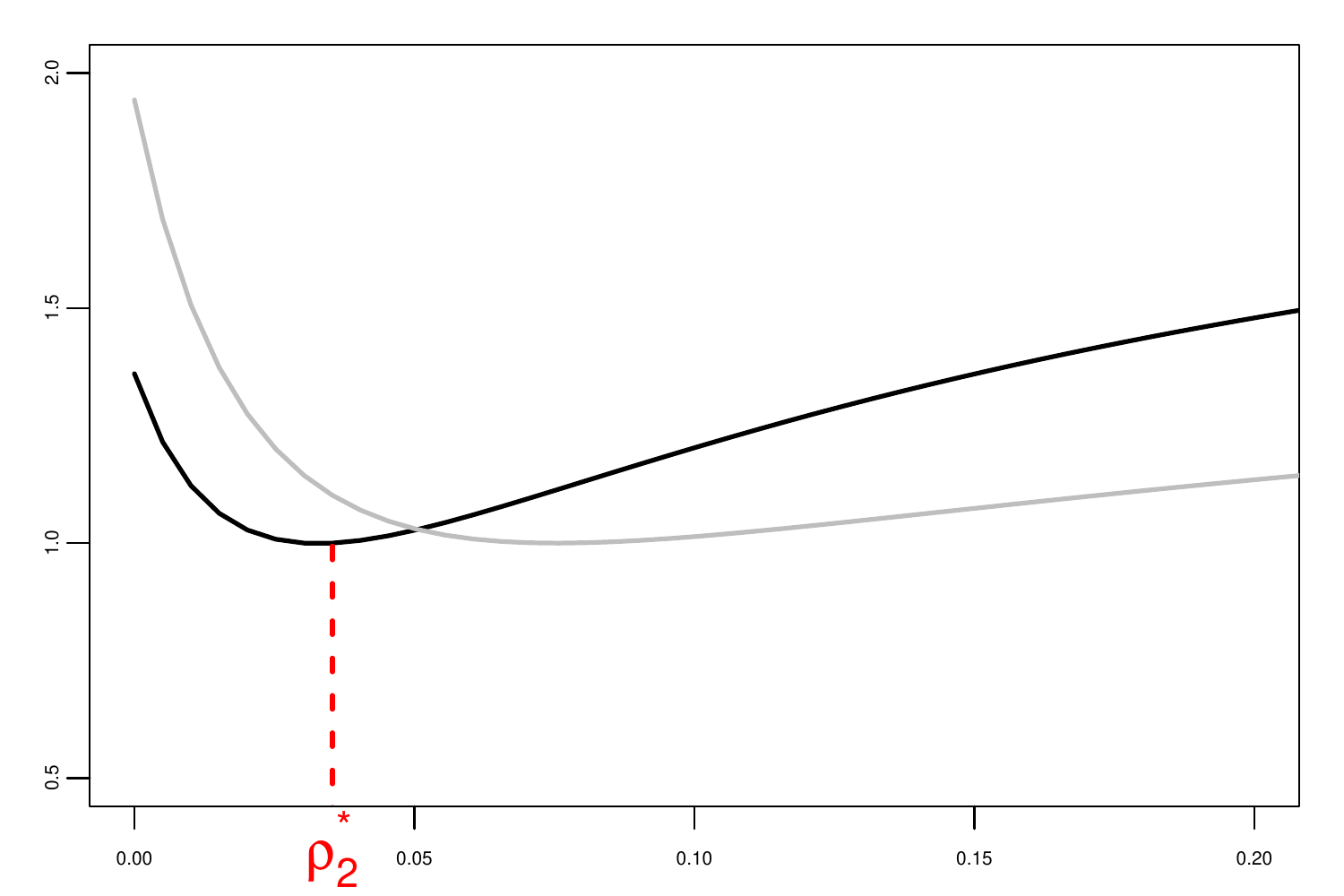}
}
\caption{Choice of the optimal value of $\rho$. }
\label{fig:fct_rhosel}
\end{figure}

In Figure \ref{fig:fct_rhosel}, we depict the objective function in equation \eqref{eq:mincritrho} for the optimal choice of $\rho$, using both Definition \ref{def:fairness1} (left panel) and Definition \ref{def:fairness2} (right panel). The optimal value of $\rho$ is obtained in our case by fixing $\varsigma = 1$ (solid black line). However, if a decision maker wished to impose more \text{fairness}, this could be achieved by setting $\varsigma > 1$. For illustrative purposes, we also report the objective function when $\varsigma = 2$ (solid grey line). It can be seen that this leads to a larger value of $\rho^\ast$, but also that the objective function tends to flatten out. 

\begin{figure}[!h]
\centering
\subfigure[Definition \ref{def:fairness1}]{
\includegraphics[scale=.36]{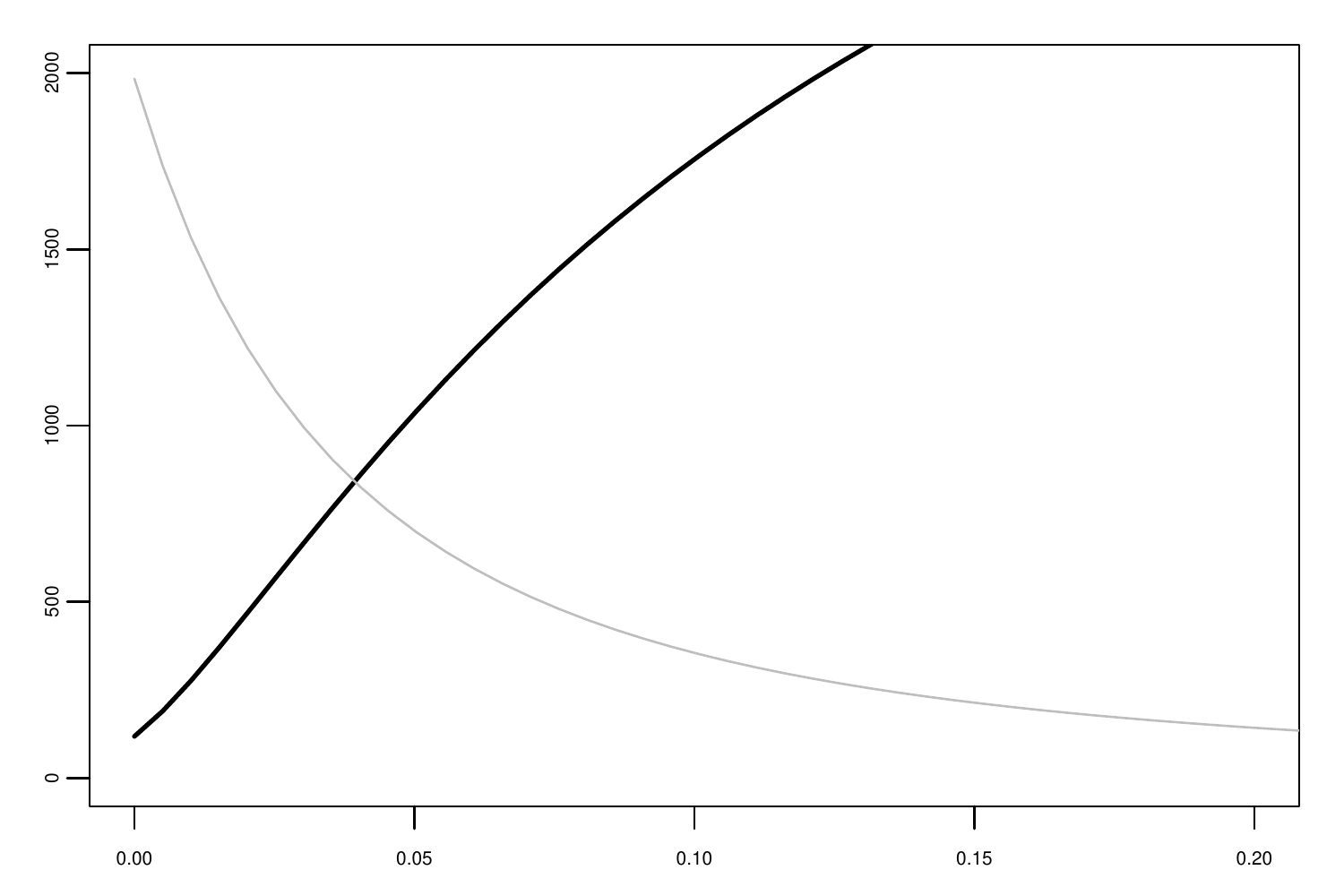}
}
\subfigure[Definition \ref{def:fairness2}]{
\includegraphics[scale=.36]{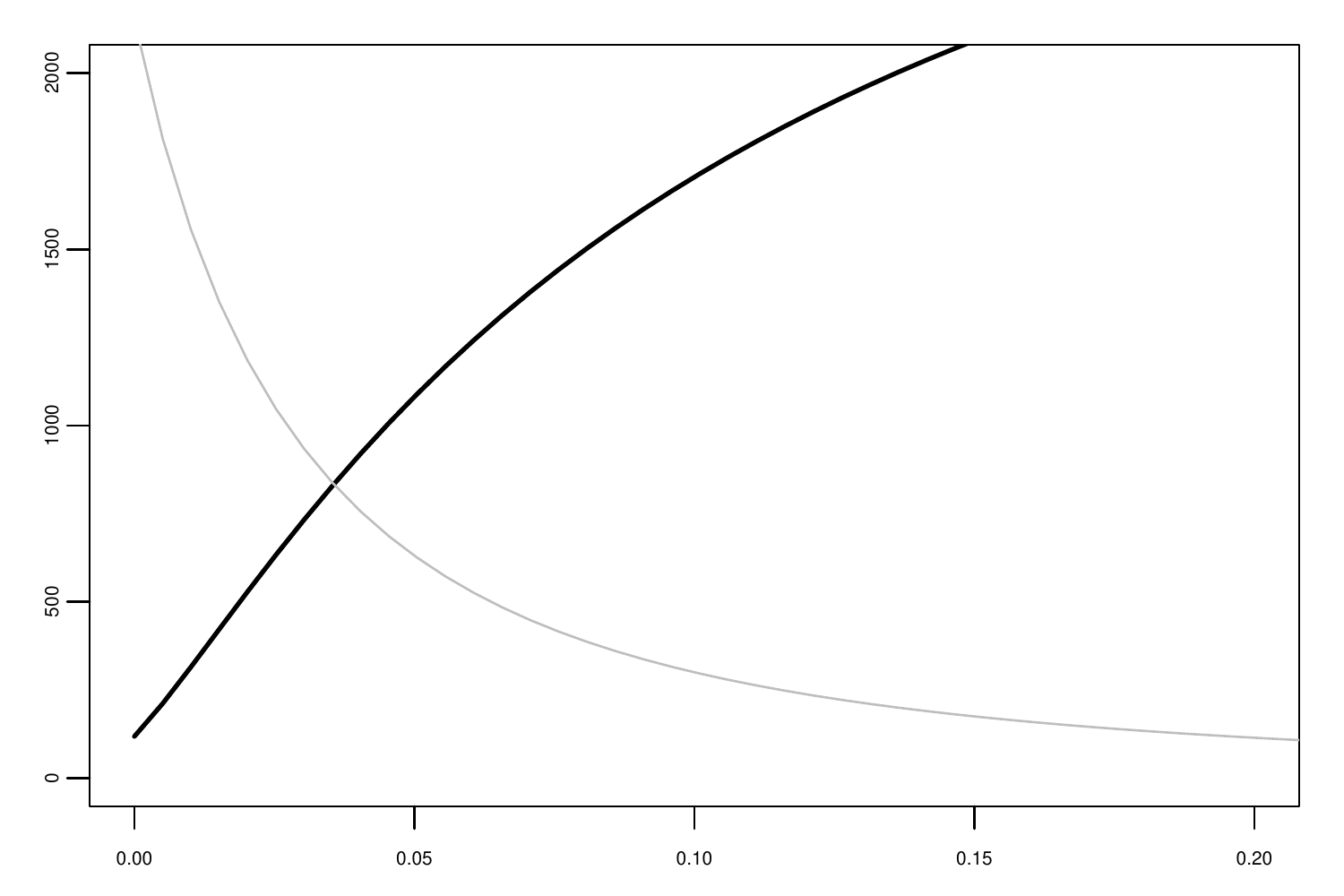}
}
\caption{Cost and benefit of fairness as a function of the penalization parameter $\rho$.}
\label{fig:fct_rhofun}
\end{figure}

We also present in Figure \ref{fig:rhofun} the trade-off between the statistical loss (solid black line), $\Vert \hat\varphi_{\alpha,\rho,j} - \hat{\varphi}_\alpha \Vert^2$, which can be interpreted as the cost of imposing a fair solution, and the benefit of fairness (solid grey line), which is measured by the squared norm of $F_{n,j} \hat\varphi_{\alpha,\rho,j}$, when $j = \lbrace 1,2 \rbrace$ to reflect both Definitions \ref{def:fairness1} (left panel) and \ref{def:fairness2} (right panel). In both cases, we fix $\varsigma = 1$. The upward sloping line is the squared deviation from the unconstrained estimator which increases with $\rho$. The downward sloping curve is the norm of the projection of the estimator onto the space of fair functions, which converges to zero as $\rho$ increases.  

\begin{figure}[!h]
\centering
\subfigure[Definition \ref{def:fairness1}]{
\includegraphics[scale=.36]{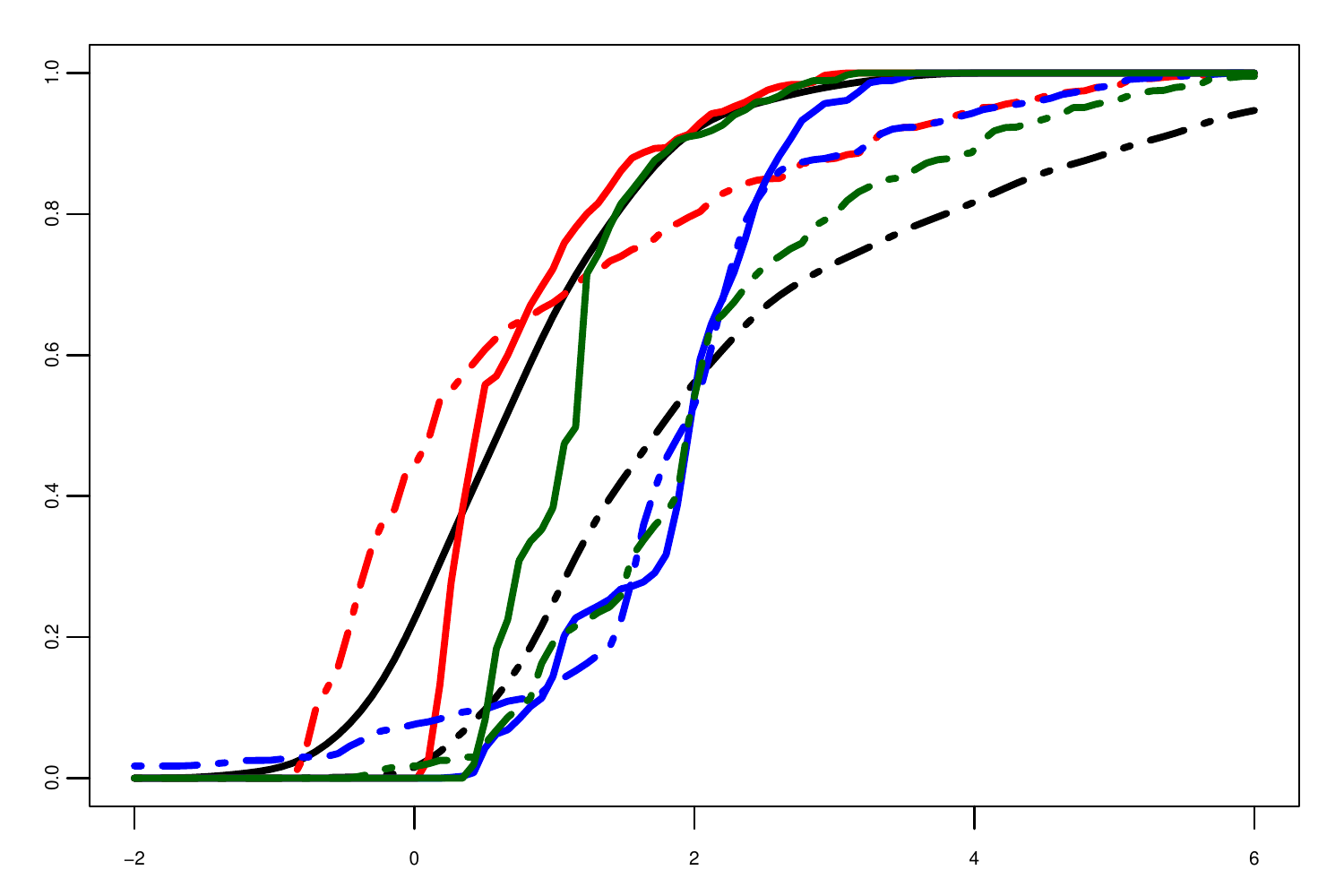}
}
\subfigure[Definition \ref{def:fairness2}]{
\includegraphics[scale=.36]{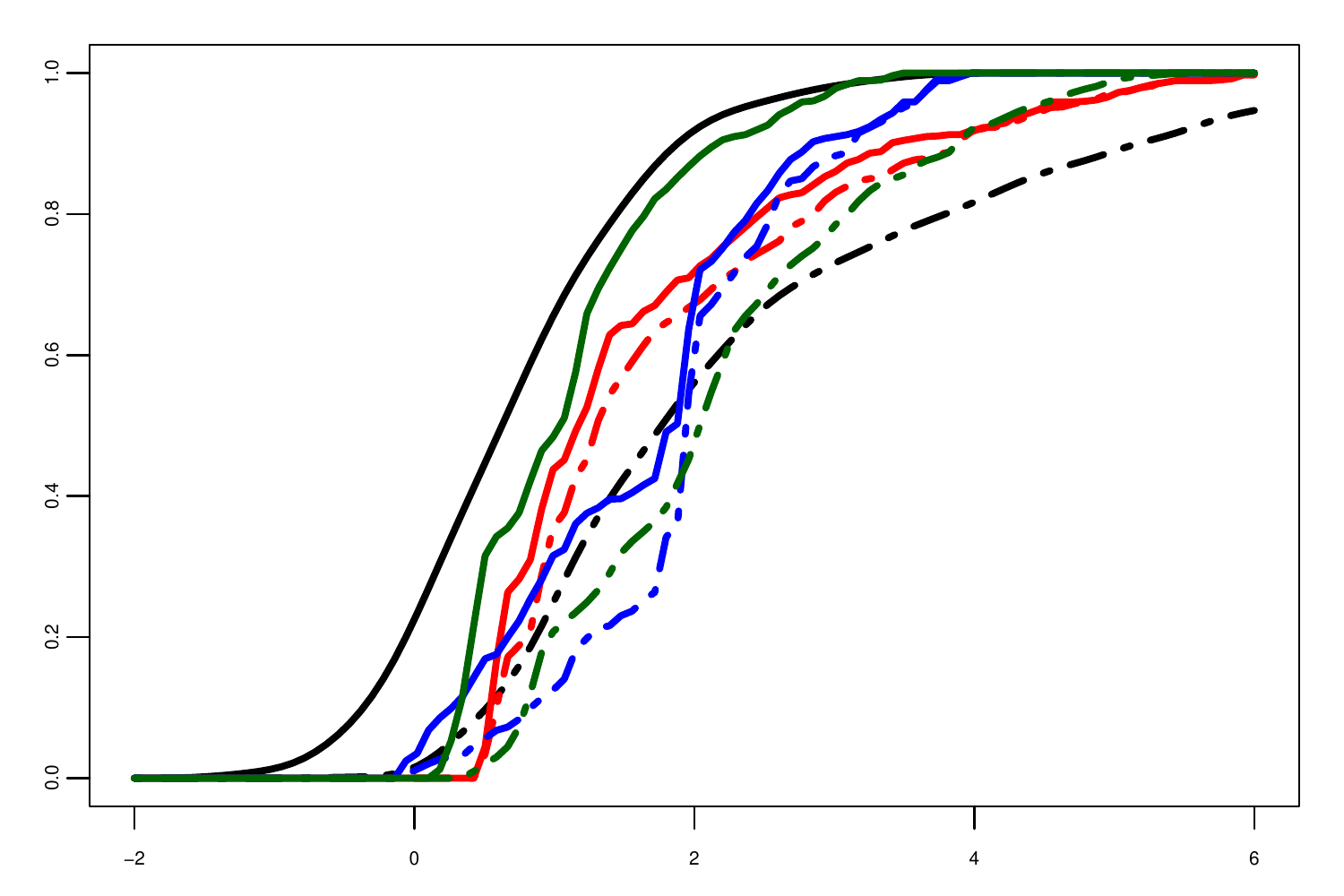}
}
\caption{Density of the predicted values from the constrained models. Solid line is group $S = 0$, and dashed-dotted line is group $S=1$. Black lines are the densities of the observed data; red lines are from constrained model 1; blue from constrained model 2; green from constrained model 3.}
\label{fig:predval}
\end{figure}

Finally, it is interesting to assess how the different definitions of fairness and the different implementations affect the distribution of the predicted values. This prediction is done \textit{in-sample} as its goal is not to assess the predictive properties of our estimator but rather to assess how the different definitions of fairness and the various ways to impose the fairness constraint in estimation affect the distribution of the model predicted values. 

The black lines in Figure \ref{fig:predval} represent the empirical CDF of the dependent variable $Y$ for $S =0$ (dashed-dotted black line), and $S=1$ (dashed black line). This is compared with the predictions using estimators 1 (red lines), 2 (blue lines), and 3 (green lines). In the data, the distribution of $Y$ given $S=1$ stochastically dominates the distribution of $Y$ given $S = 0$. 

Notice that in case of fairness as defined in \ref{def:fairness1}, the estimator which modifies the conditional expectation operator to projects directly onto the space of fair functions seems to behave best in terms of fairness, as the distribution of the predicted values for groups $0$ and $1$ are very similar. The estimator which imposes approximate fairness obviously lies somewhere in between the data and the previous estimator. The projection of the unconstrained estimator onto the space of fair functions does not seem to deliver an appropriate distribution of the predicted values. What happens is that this estimator penalizes people in group $1$ with low values of $Z$, in order to maintain fairness on the average while maintaining a substantial difference in the distribution of the two groups. 

Differently, in the case of fairness as defined in \ref{def:fairness2}, the projection of the unconstrained estimator seems to behave best. However, this may be due to the fact that the distribution of $Z$ given $S=0$ and $S=1$ are substantially similar. If however, there is a more difference in the observable characteristics by group, this estimator may not behave as intended. 

\section{Conclusions}

In this chapter, we consider the issue of estimating a structural econometrics model when a fairness constraint is imposed on the solution. We focus our attention on models when the function is the solution to a linear inverse problem, and the fairness constraint is imposed on the included covariates and can be expressed as a linear restriction on the function of interest. We also discuss how to impose an \textit{approximately fair} solution to a linear functional equation and how this notion can be implemented to balance accurate predictions with the benefits of a fair machine learning algorithm. We further present regularity conditions under which the fair approximation converges towards the projection of the true function onto the null space of the fairness operator. Our leading example is a nonparametric instrumental variable model, in which the fairness constraint is imposed.  We detail the example of such a model when the sensitive attribute is binary and exogenous \cite{centorrino2017}. 

The framework introduced in this chapter can be extended in several directions. The first significant extension would be to consider models in which the function $\varphi_\dagger$ is the solution to a nonlinear equation. The latter can arise, for instance, when the conditional mean independence restriction is replaced with full independence between the instrumental variable and the structural error term \cite{centorrinofeveflorens2019,centorrinoflorens2021}. Moreover, one can potentially place fairness restrictions directly on the decision algorithm or on the distribution of predicted values. These restrictions usually imply that the fairness constraint is nonlinear, and a different identification and estimation approach should be employed. \\ In this work, we restrict ourselves to group fairness notions and did not consider fairness at an individual notions such as in \cite{kusner2017counterfactual},  \cite{de2021transport},  which could enable to understand fairness in econometry from a causal point of view.\\
 Finally, the fairness constraint imposed in this paper is limited to the function, $\varphi$. However, other constraints may be imposed directly on the functional equation. For instance, on the selection of the instrumental variables, which will be the topic of a further work. 

\bibliographystyle{unsrt}  
\bibliography{labib,fairness}
 
 \end{document}